\tikzset{decorated arrows/.style={
    postaction={
        decorate,
        decoration={
            markings,
            mark=between positions 0 and 1 step 15mm with {\arrow[black]{stealth};}
            }
        },
    }
}
\tikzset{decorated arrows2/.style={
    postaction={
        decorate,
        decoration={
            markings,
            mark=at position 15mm with {\arrow[black]{stealth};}
            }
        },
    }
}
\def\set@curr@file#1{%
  \begingroup
    \escapechar\m@ne
    \xdef\@curr@file{\expandafter\string\csname #1\endcsname}%
  \endgroup
}
\def\quote@name#1{"\quote@@name#1\@gobble""}
\def\quote@@name#1"{#1\quote@@name}
\def\unquote@name#1{\quote@@name#1\@gobble"}
\newtheorem{theorem}{Theorem}
\newtheorem{lemma}{Lemma}
\newcommand{\dpt}{\displaystyle}
\newcommand{\RN}[1]{%
  \textup{\uppercase\expandafter{\romannumeral#1}}%
}
\author[J. P. S. M. de Carvalho]{Jo\~ao P. S. Maur\'icio de Carvalho \\
\MakeLowercase{jp.carvalho@upt.pt}  \\
\\
Centre for Mathematics, University of Porto, \\ Rua do Campo Alegre s/n, Porto 4169-007, Portugal \\
\\
Prince Henry Portucalense University, \\ Rua Dr. Ant\'onio Bernardino de Almeida 541, Porto 4200-072, Portugal \\
}
\begin{document}

\subjclass[2010]{34C11, 34C23, 34C60, 37N25, 37N30, 65Z05, 92B05}
\keywords{HIV/SARS-CoV-2, Bifurcation analysis, Basic reproduction number, HAART, Coinfection}

\title[Qualitative analysis of HAART effects on HIV and SARS-CoV-2 coinfection]
{Qualitative analysis of HAART effects \\ on HIV and SARS-CoV-2 coinfection}

\date{\today}

\begin{abstract}
HIV is known for causing the destruction of the immune system by affecting different types of cells, while SARS-CoV-2 is an extremely contagious virus that leads to the development of COVID-19. In this study, we propose a mathematical model to investigate the interaction between HIV and SARS-CoV-2 under highly active antiretroviral therapy (HAART). We determine the conditions for the endemic equilibria of both viruses, showing that transcritical bifurcations occur when the basic reproduction numbers of HIV and SARS-CoV-2 pass through 1. We set the condition for the stability of the disease-free equilibrium point of the model with coinfection as a function of the basic reproduction number $\mathcal{R}_0$. Through numerical simulations, we conclude that HAART, used to control HIV, also reduces the proliferation of SARS-CoV-2-infected cells in coinfected hosts. These findings provide important insights into the epidemiological dynamics of HIV and SARS-CoV-2 coinfection.
\end{abstract}

\maketitle


\section{Introduction}

\label{intro}
\noindent The first official case of CoViD-19 in China emerged in December 2019 and was linked to the Huanan seafood market, in Wuhan \cite{Hu2021}. Since then SARS-CoV-2 has reached 219 countries and territories having infected more than 750 million people, thus becoming a global pandemic and causing more than 7 million deaths worldwide \cite{WHO2021}. The most frequently noticed symptoms in infected people are fever, cough and respiratory disorders \cite{Vitiello2020,Sun2020}. The most affected people are elderly and adults over 60 years old and/or those with comorbidities, such as obesity, diabetes, oncological diseases, heart problems, among others \cite{RothanByrareddy2020,Moriconi2020,Bajgain2021}.

\smallskip

\noindent HIV continues to be a global health challenge, with 42.3 million deaths to date and ongoing transmission around the world. According to the World Health Organization (WHO), at the end of 2023, around 39.9 million people were living with HIV, of whom approximately 75\% have access to antiretroviral treatment \cite{Basoulis2023} and 65\% of them in the WHO African Region. That same year, 1.3 million new infections were recorded and 630\,000 people died from HIV-related causes. Although there is no cure for HIV, access to prevention, diagnosis and treatment helps people manage the virus as a long-term disease. World health organizations are working to end the HIV epidemic by 2030 \cite{WHO_HIV}.

\smallskip

\noindent Coinfection by HIV and SARS-CoV-2 raises a significant public health concern, since both diseases affect the immune system in distinct and complex ways. HIV progressively weakens the body's immune defenses, while SARS-CoV-2 can trigger severe and potentially dysregulated immune responses. In seropositive individuals, the interaction between these two viruses can aggravate the clinical progression of COVID-19 \cite{Hoft2024}, increase susceptibility to opportunistic infections and affect the effectiveness of antiviral treatments.

\smallskip

\noindent From a mathematical perspective, studying the dynamics of this coinfection is essential to understanding how the two viruses interact and impact the health outcomes of patients. Mathematical models allow us to explore the progression of infection in specific populations, predict outbreak scenarios and assess the impact of different interventions, such as combination therapies and vaccination campaigns \cite{Brauer2012}. For individuals living with HIV and contracting COVID-19, these models are particularly valuable since they help identify factors that can exacerbate the clinical picture and assist in the design of more effective treatment strategies, both from an immunological and epidemiological point of view \cite{Hoft2024}.



\medskip

\noindent {\bf State of the art.}
  \noindent In 2022, Mekonena and Obsu \cite{MekonenaObsu2022} analyzed a model for TB-COVID-19 coinfection, compartmentalizing the population into seven classes. The authors computed the basic reproduction numbers for each disease, showing that the disease-free and endemic equilibria remain stable (or unstable) if these numbers are lower (or higher) than one. The sensitivity analysis suggested that reducing contact rates and increasing the speed of transitions from latent to infected can reduce the spread of both diseases.
  
  \smallskip
  
  \noindent In 2022, Batu {\it et al.}~\cite{Batu2023} developed a mathematical model to study the impact of intervention strategies and identify mortality risk factors in seropositive people infected with COVID-19.  Numerical simulations, based on data from Ethiopia, revealed that vaccination against COVID-19 and increased treatment rates reduce cases of co-infection and the risk of mortality for HIV-infected individuals, underlining the importance of vaccination programs and medical interventions.
  
  \smallskip
  
\noindent   In 2024, Vemparala {\it et al.}~\cite{Vemparala2024} studied a model based on HIV-1 control and remission through mathematical modeling. The authors explored the mechanisms behind natural and post-treatment control of HIV-1, assessed the potential causes of loss of control and quantified the effects of intervention. Their work highlights both the progress achieved in the optimization of the intervention and the ongoing challenges in applying these results in practical contexts.
  
  \smallskip
  
\noindent  During the same year, Chen {\it et al.}~\cite{Chen2024} studied an HIV virus-cell model with intracellular delays, focusing on the interactions between wild-type and drug-resistant HIV strains. The authors set stability criteria based on the basic reproduction number and analyzed Hopf bifurcations, finding that interactions between two strains lead to more complex dynamics, including higher viral loads and potential instability. They found that drug resistance influences the wild-type strain's survival, with an impact on HIV transmission.

\medskip

\noindent {\bf Article structure.}
We describe the model and the population dynamics in Section \ref{SEC_2}. In Section \ref{SEC_3} we prove the positivity and boundedness of the solutions of the coinfection model, analyze the equilibria of the HIV and SARS-CoV-2 submodels and their basic reproduction numbers. We also present our main results, followed by their proofs in Sections \ref{SEC_5} and \ref{SEC_6}. In Section \ref{SEC_4} we carry out a sensitivity analysis of the parameters that constitute the basic reproduction numbers for each submodel. We perform several numerical simulations in Section \ref{SEC_7}, and in Section \ref{SEC_CONCL} we present our conclusions and discuss potential future work.

\medskip

\noindent With this work we aim to understand how HAART therapy affects the dynamics of viral load and cells infected with SARS-CoV-2. For this purpose, we will first perform an analytical study of the model before exploring the interaction of both viruses in the presence of HAART through numerical simulations.


\section{Model formulation}
\label{SEC_2}
\noindent The model we propose is subdivided into four cell populations (cell population $P_C$):

\medskip

\begin{itemize}
\item[$T$:] number of healthy/target cells susceptible to infection;

\bigskip

\item[$I_H$:] number of cells that are currently infected by HIV; 

\bigskip

\item[$I_S$:] number of cells that are currently infected by SARS-CoV-2; 

\bigskip

\item[$C$:] number of cells that are currently coinfected by HIV and SARS-CoV-2,
\end{itemize}

\medskip

\noindent and two classes of virus (virus population $P_V$): 

\medskip

\begin{itemize}
\item[$V_H$:] HIV viral load;

\bigskip

\item[$V_S$:] SARS-CoV-2 viral load.
\end{itemize}

\medskip

\noindent Let

\begin{equation}\label{modelo}
\begin{array}{lcl}
\dot{X} = \mathcal{F}(X) \quad \Leftrightarrow \quad
\begin{cases}
&\dot{T} = \lambda - k_1 \left(1-\epsilon_{RT}\right)TV_H - k_2 T V_S - \mu T \\
\\
&\dot{I}_H = k_1 \left(1-\epsilon_{RT}\right)TV_H - k_2 I_H V_S - \mu I_H \\
\\
&\dot{I}_S = k_2 T V_S - k_1\left(1-\epsilon_{RT}\right) I_S V_H - \mu I_S \\
\\
&\dot{V}_H = n_H \left(1-\epsilon_P\right) \mu I_H - \sigma_H V_H \\
\\
&\dot{V}_S = n_S \mu I_S - \sigma_S V_S \\
\\
&\dot{C} = k_1 \left(1-\epsilon_{RT}\right) I_S V_H + k_2 I_H V_S - \mu C ,
\end{cases}
\end{array}
\end{equation}

\medskip

\noindent be the nonlinear system of ODE that modulates the intereaction between cells and virus, where \smallskip

$$
\begin{array}{lcl}
X(t) &=& \left( T(t), I_H(t), I_S(t), V_H(t), V_S(t), C(t) \right), \\
\\
\dot{X} &=& \left(\dot{T}, \dot{I}_H, \dot{I}_S, \dot{V}_H, \dot{V}_S, \dot{C} \right) \,\,\, = \,\,\, \dpt \left(\frac{\mathrm{d}T}{\mathrm{d}t},\frac{\mathrm{d}I_H}{\mathrm{d}t},\frac{\mathrm{d}I_S}{\mathrm{d}t},\frac{\mathrm{d}V_H}{\mathrm{d}t}, \frac{\mathrm{d}V_S}{\mathrm{d}t}, \frac{\mathrm{d}C}{\mathrm{d}t}\right),
\end{array}
$$

\medskip

\noindent and 

$$\Psi = \left\{ (\lambda,k_1,k_2,\mu,\epsilon_{RT},\epsilon_P,n_H,n_S,\sigma_H,\sigma_S) \in (\mathbb{R}^+)^{10} \right\}$$

\medskip

\noindent is the set of parameters of \eqref{modelo}. The vector field associated to \eqref{modelo} will be called by $\mathcal{F}$ and the associated flow is 

$$\psi \left(t, (T_0, {I_H}_0, {I_S}_0, {V_H}_0, {V_S}_0, C_0) \right)  , \,\, t \in \mathbb{R}_0^+ \, , \quad \left(T_0, {I_H}_0, {I_S}_0, {V_H}_0, {V_S}_0, C_0 \right) \in (\mathbb{R}_0^+)^6 \, .
$$ 

\medskip

\noindent {\bf Dynamics and interpretation of the constants.} The constant production rate of healthy $T$ cells is given by $\lambda$. Healthy T cells are infected by HIV and SARS-CoV-2 at a rate $k_1$ and $k_2$, respectively. The parameter $0 \leq \epsilon_{RT} \leq 1$ represents the efficacy of reverse transcriptase inhibitors (RTI), reaching 100\% effectiveness when $\epsilon_{RT} = 1$. Similarly, $0 \leq \epsilon_{P} \leq 1$ denotes the efficacy of protease inhibitors (PI), with $\epsilon_{P} = 1$ indicating full effectiveness. HIV and SARS-CoV-2 are produced by infected cells with the bursting size $n_H$ and $n_S$, respectively. The natural death rates of $T$, $I_H$, $I_S$ and $C$ cells are given by $\mu_T$, $\mu_H$, $\mu_S$ and $\mu_C$, respectively. However, for convenience in algebraic calculations, we will assume an equal natural death rate $\mu$ for all cells. The death rate of HIV is given by $\sigma_H$. The viral load of SARS-CoV-2 dies at a rate $\sigma_S$. The description and value of these parameters can be found in Table \ref{tabela} and the dynamics of the cell and virus populations are given by

\begin{equation}\label{P_C}
\begin{array}{lcl}
P_C(t) = T(t) + I_H(t) + I_S(t) + C(t)
\end{array}
\end{equation}

\noindent and

\begin{equation}\label{P_V}
\begin{array}{lcl}
P_V(t) = V_H(t) + V_S(t)
\end{array}
\end{equation}

\medskip

\noindent respectively. Figure \ref{diagram} illustrates a diagram depicting the interactions among $P_C(t)$ under the HIV and SARS-CoV-2 viral loads. Moreover, the dynamics of all classes of \eqref{modelo} are given by

\begin{equation*}\label{A_total_V_P}
\begin{array}{lcl}
N(t) = P_C(t) + P_V(t) = T(t) + I_H(t) + I_S(t) + V_H(t) + V_S(t) + C(t) .
\end{array}
\end{equation*} 

\begin{figure}[ht!]
\includegraphics[width=1.19\textwidth]{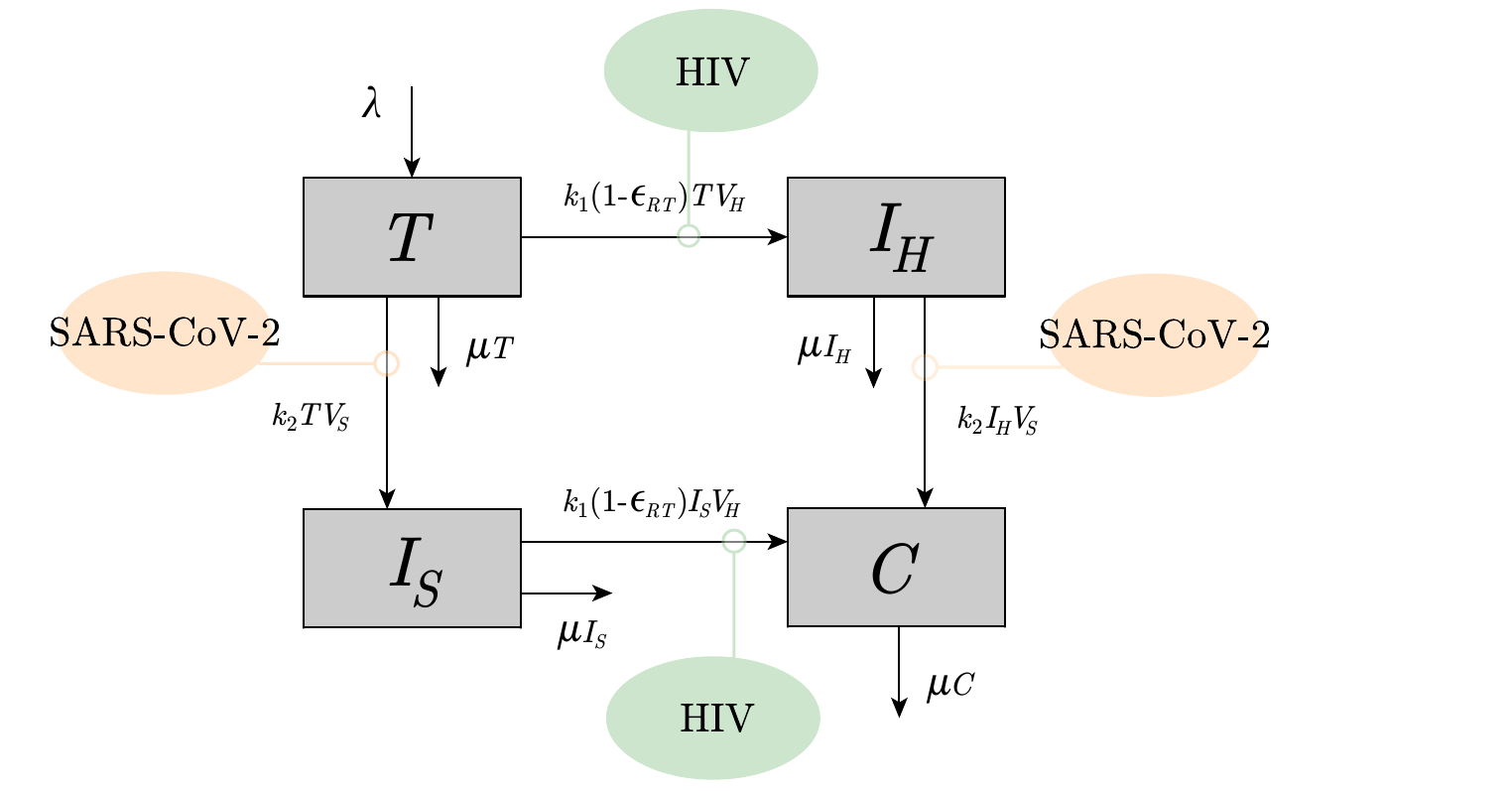}
\caption{Diagram of the interactions between $T$, $I_H$, $I_S$ and $C$ cells of system \eqref{modelo} under the influence of the viral loads HIV and SARS-CoV-2.}
\label{diagram}
\end{figure}


\section{Properties of the model and main results} \label{SEC_3}
\noindent In this section we prove that all solutions of system \eqref{modelo} are positive and bounded. Moreover, we analyze two submodels, HIV and SARS-CoV-2 submodels, derived from system \eqref{modelo}. 

\medskip

\noindent {\bf Positivity and boundedness of solutions.}


\begin{theorem} \label{thm_bound}
The solutions for $T$, $I_H$, $I_S$, $V_H$, $V_S$ and $C$ of model \eqref{modelo} are nonnegative and remain bounded for all $t \geq 0$ within the biologically feasible region defined by the set

\begin{scriptsize}
\begin{equation*}\label{Omega_}
\begin{array}{lcl}
\Omega = \left\{  \left(T, I_H, I_S, V_H, V_S, C \right) \in \left(\mathbb{R}_0^+\right)^6: 0 \, \leq \, P_C \, \leq \, \dfrac{\lambda}{\mu} \,\, , \quad 0 \, \leq \, P_V \, \leq \, \dfrac{n_H \left(1-\epsilon_P\right) \mu I_H^{\text{max}} + n_S \mu I_S^{\text{max}}}{\sigma} \right\}.
\end{array}
\end{equation*} 
\end{scriptsize}

\end{theorem}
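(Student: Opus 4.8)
The plan is to establish positivity first and boundedness second, treating the cell population $P_C$ and the viral population $P_V$ separately since the latter depends on the former. For positivity I would invoke the standard invariance (tangent) criterion for the closed nonnegative orthant $(\mathbb{R}_0^+)^6$: it suffices to check that on each face where a single coordinate vanishes, the corresponding component of $\mathcal{F}$ is nonnegative whenever the remaining coordinates are nonnegative. Reading this off system \eqref{modelo} is immediate: on $\{T=0\}$ one has $\dot T = \lambda > 0$; on $\{I_H=0\}$, $\dot I_H = k_1(1-\epsilon_{RT})TV_H \ge 0$; on $\{I_S=0\}$, $\dot I_S = k_2 TV_S \ge 0$; on $\{V_H=0\}$, $\dot V_H = n_H(1-\epsilon_P)\mu I_H \ge 0$; on $\{V_S=0\}$, $\dot V_S = n_S\mu I_S \ge 0$; and on $\{C=0\}$, $\dot C = k_1(1-\epsilon_{RT})I_S V_H + k_2 I_H V_S \ge 0$. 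Hence no trajectory starting in $(\mathbb{R}_0^+)^6$ can cross any face, so all six components stay nonnegative for $t\ge 0$. (Equivalently, one may integrate each equation with the appropriate integrating factor and read off nonnegativity directly.)

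For the cell compartment I would add the four cell equations and exploit the structural cancellation of the bilinear terms: the incidence gains in $\dot I_H$, $\dot I_S$, $\dot C$ exactly cancel the incidence losses in $\dot T$, $\dot I_H$, $\dot I_S$, leaving only the source and the common mortality, so that
$$\dot P_C = \lambda - \mu P_C.$$
This scalar linear equation integrates to $P_C(t) = \tfrac{\lambda}{\mu} + \bigl(P_C(0)-\tfrac{\lambda}{\mu}\bigr)e^{-\mu t}$, whence $P_C(t)\le \tfrac{\lambda}{\mu}$ whenever $P_C(0)\le\tfrac{\lambda}{\mu}$, and $\limsup_{t\to\infty}P_C(t)=\tfrac{\lambda}{\mu}$ in all cases. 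In particular each nonnegative cell variable is dominated by $P_C$, so $I_H$ and $I_S$ admit finite suprema $I_H^{\text{max}}, I_S^{\text{max}}\le \tfrac{\lambda}{\mu}$.

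The viral compartment is the delicate step, and the place where the single symbol $\sigma$ appearing in $\Omega$ must be interpreted; I would set $\sigma = \min\{\sigma_H,\sigma_S\}$. Summing the two viral equations gives $\dot P_V = n_H(1-\epsilon_P)\mu I_H + n_S\mu I_S - \sigma_H V_H - \sigma_S V_S$, and using the positivity of $V_H,V_S$ together with $\sigma\le\sigma_H,\sigma_S$ and the cell bounds yields the differential inequality
$$\dot P_V \le n_H(1-\epsilon_P)\mu I_H^{\text{max}} + n_S\mu I_S^{\text{max}} - \sigma P_V.$$
A Gronwall/comparison argument against the associated scalar linear ODE then produces $\limsup_{t\to\infty}P_V(t)\le \tfrac{n_H(1-\epsilon_P)\mu I_H^{\text{max}} + n_S\mu I_S^{\text{max}}}{\sigma}$, which is precisely the upper bound defining $\Omega$. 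The main obstacle is exactly this coupling: the viral estimate is not autonomous but forced by $I_H$ and $I_S$, so the cell bound must be secured first and then substituted into the viral inequality, after which everything reduces to routine scalar comparison. Assembling the three steps shows that $\Omega$ is forward invariant and attracts all nonnegative trajectories, which completes the proof.
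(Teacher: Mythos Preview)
Your proof is correct and follows essentially the same three-step structure as the paper's own argument: the tangent criterion on each coordinate face for nonnegativity, the exact identity $\dot P_C = \lambda - \mu P_C$ for the cell bound, and a Gronwall/comparison step for the viral bound after substituting $I_H^{\max}$ and $I_S^{\max}$. The only difference is in the handling of $\sigma$: the paper simply \emph{assumes} $\sigma_H = \sigma_S = \sigma$ to obtain an exact linear equation for $P_V$, whereas you set $\sigma = \min\{\sigma_H,\sigma_S\}$ and work with a differential inequality, which is slightly more general but otherwise identical in spirit.
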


\begin{proof}

For the active population and virus of system \eqref{modelo} we have

\smallskip

\begin{equation*}\label{positivity}
\begin{array}{lcl}
\bigskip
\bigskip
\dfrac{\mathrm{d}T}{\mathrm{d}t} \Big|_{T = 0} &=& \lambda \,\,\, > \,\,\, 0 \\
\bigskip
\bigskip
\dfrac{\mathrm{d}I_H}{\mathrm{d}t} \Big|_{I_H = 0} &=& k_1 \left(1-\epsilon_{RT}\right)TV_H \,\,\, \geq \,\,\, 0 \\
\bigskip
\bigskip
\dfrac{\mathrm{d}I_S}{\mathrm{d}t} \Big|_{I_S = 0} &=& k_2 T V_S \,\,\, \geq \,\,\, 0 \\
\bigskip
\bigskip
\dfrac{\mathrm{d}V_H}{\mathrm{d}t} \Big|_{V_H = 0} &=& n_H \left(1-\epsilon_P\right) \mu I_H  \,\,\, \geq \,\,\, 0 \\
\bigskip
\bigskip
\dfrac{\mathrm{d}V_S}{\mathrm{d}t} \Big|_{V_S = 0} &=& n_S \mu I_S  \,\,\, \geq \,\,\, 0 \\
\bigskip
\bigskip
\dfrac{\mathrm{d}C}{\mathrm{d}t} \Big|_{C = 0} &=& k_1 \left(1-\epsilon_{RT}\right) I_S V_H + k_2 I_H V_S \,\,\, \geq \,\,\, 0 .
\end{array}
\end{equation*}


This demonstrates that $\left(\mathbb{R}_0^+\right)^6$ is positively invariant. Now, if we prove that $P_C$ and $P_V$ are both bounded, then $N$ is bounded. Consequently, each population and virus is bounded. Hence, from \eqref{P_C} we have:

\begin{equation*}\label{din_total}
\begin{array}{lcl}
\dot{P_C} &=& \dot{T} + \dot{I}_H + \dot{I}_S + \dot{C} \\
\\
&=& \lambda - k_1 \left(1-\epsilon_{RT}\right)TV_H - k_2 T V_S - \mu T \\
\\
&& + k_1 \left(1-\epsilon_{RT}\right)TV_H - k_2 I_H V_S - \mu I_H \\ 
\\
&& + k_2 T V_S - k_1\left(1-\epsilon_{RT}\right) I_S V_H - \mu I_S \\
\\
&&+ k_1 \left(1-\epsilon_{RT}\right) I_S V_H + k_2 I_H V_S - \mu C \\
\\
&=& \lambda - \mu (T + I_H + I_S + C) \\
\\
&=& \lambda - \mu P_C.
\end{array}
\end{equation*}

\medskip

Using the classical differential version of the Gronwall's Lemma, we have:

\begin{equation*} 
\begin{array}{lcl}
P_C(t) \leq {P_C}_0 e^{- \mu t} - \dfrac{\lambda}{\mu} (e^{-\mu t} - 1) ,
\end{array}
\end{equation*}

\medskip

from where we conclude that

\begin{equation*}\label{gronwall_final}
\begin{array}{lcl}
\dpt 0 \,\, \leq \,\, \lim_{t \rightarrow \infty}{P_C(t)} \,\, \leq \,\, \lim_{t \rightarrow \infty}{\left[ {P_C}_0 e^{- \mu t} - \dfrac{\lambda}{\mu} (e^{-\mu t} - 1) \right]} \,\, = \,\, \dfrac{\lambda}{\mu} .
\end{array}
\end{equation*}

\medskip

Then, $T$, $I_H$, $I_S$ and $C$ are bounded. From \eqref{P_V}, we write $P_V$ as:

\begin{eqnarray*}
\nonumber \dot{P_V} &=& \dot{V}_H + \dot{V}_S \\
\nonumber \\
&=& n_H \left(1-\epsilon_P\right) \mu I_H - \sigma_H V_H + n_S \mu I_S - \sigma_S V_S. \label{vir_total}
\end{eqnarray*}

\medskip

Let us consider $\sigma_H = \sigma_S = \sigma$. Since $I_H$ and $I_S$ are bounded, then we use $I_H^{\text{max}}$ and $I_S^{\text{max}}$ for the maximum values. Hence, rewriting $\dot{P}_V$ we get

\begin{equation*}\label{vir_total_2}
\begin{array}{lcl}
\dot{P_V} &=& n_H \left(1-\epsilon_P\right) \mu I_H^{\text{max}} + n_S \mu I_S^{\text{max}} - \sigma \left(V_H + V_S \right) \\
\\
&=& n_H \left(1-\epsilon_P\right) \mu I_H^{\text{max}} + n_S \mu I_S^{\text{max}} - \sigma P_V ,
\end{array}
\end{equation*}

\medskip

and making use of the classical differential version of the Gronwall's Lemma, we get

\begin{equation*}\label{gronwall_final_2}
\begin{array}{lcl}
&&\dpt 0 \,\, \leq \,\, \lim_{t \rightarrow \infty}{P_V(t)} \,\, \leq \,\, \lim_{t \rightarrow \infty}{\left[ {P_V}_0 e^{- \sigma t} - \dfrac{n_H \left(1-\epsilon_P\right) \mu I_H^{\text{max}} + n_S \mu I_S^{\text{max}}}{\sigma} (e^{-\sigma t} - 1) \right]} \\
\\
&\Leftrightarrow& \dpt 0 \,\, \leq \,\, \lim_{t \rightarrow \infty}{P_V(t)} \,\, \leq \,\, \dfrac{n_H \left(1-\epsilon_P\right) \mu I_H^{\text{max}} + n_S \mu I_S^{\text{max}}}{\sigma} .
\end{array}
\end{equation*}

Then, $V_H$ and $V_S$ are bounded and the Theorem is proved.
\end{proof}

\noindent We compute the basic reproduction number of HIV and SARS-CoV-2 submodels.

\smallskip

\noindent {\bf Submodel analysis.} In the context of this work, the basic reproduction number $\mathcal{R}_0$ is the number of secondary infections due to a single infected cell in a susceptible healthy cell population \cite{DriesscheWatmough2002}. 

\bigskip

\noindent {\bf HIV submodel.} There is no SARS-CoV-2, imposing $I_S = V_S = C = 0$:

\begin{equation}\label{hiv_submodel}
\begin{array}{lcl}
\dot{x}_{\text{hiv}} = f(x_{\text{hiv}}) \quad \Leftrightarrow \quad
\begin{cases}
&\dot{T} = \lambda - k_1 \left( 1-\epsilon_{RT} \right)TV_H - \mu T \\
\\
&\dot{I}_H = k_1 \left(1-\epsilon_{RT}\right)TV_H - \mu I_H \\
\\
&\dot{V}_H = n_H \left(1-\epsilon_P \right) \mu I_H - \sigma_H V_H \, ,
\end{cases}
\end{array}
\end{equation}

\medskip

\noindent with $x_{\text{hiv}} = (T, I_H, V_H) \in \left(\mathbb{R}_0^+\right)^3$.

\bigskip

\noindent {\bf SARS-CoV-2 submodel.} There is no HIV, imposing $I_H = V_H = C = 0$:

\begin{equation}\label{sarscov_submodel}
\begin{array}{lcl}
\dot{x}_{\text{sars}} = f(x_{\text{sars}}) \quad \Leftrightarrow \quad
\begin{cases}
&\dot{T} = \lambda - k_2 T V_S - \mu T \\
\\
&\dot{I}_S = k_2 T V_S - \mu I_S \\
\\
&\dot{V}_S = n_S \mu I_S - \sigma_S V_S \, ,
\end{cases}
\end{array}
\end{equation}

\medskip

\noindent with $x_{\text{sars}} = (T, I_S, V_S) \in \left(\mathbb{R}_0^+\right)^3$. The vector field associated to \eqref{hiv_submodel} and \eqref{sarscov_submodel} is denoted by $f(x_{\text{hiv}})$ and $f(x_{\text{sars}})$, respectively. Their flows are 

$$
\psi_{\text{hiv}} \left(t, (T_0, {I_H}_0, {V_H}_0) \right)\, , \,\, t \in \mathbb{R}_0^+ \, , \,\, \left(T_0, {I_H}_0, {V_H}_0 \right) \in (\mathbb{R}_0^+)^3
$$

\noindent and

$$
\psi_{\text{sars}} \left(t, (T_0, {I_S}_0, {V_S}_0) \right) \,, \,\, t \in \mathbb{R}_0^+ \,, \,\, \left(T_0, {I_S}_0, {V_S}_0 \right) \in (\mathbb{R}_0^+)^3 \, ,
$$

\medskip

\noindent respectively. Using the same method as in \cite{DriesscheWatmough2002} we compute the basic reproduction number for \eqref{hiv_submodel} and \eqref{sarscov_submodel}. The disease-free equilibrium of system \eqref{hiv_submodel} is given by:

\begin{equation*}\label{DFE_HIV}
\begin{array}{lcl}
E_{\text{hiv}} = \left(T_{\text{hiv}}, {I_H}_{\text{hiv}}, {V_H}_{\text{hiv}} \right) = \left(\dfrac{\lambda}{\mu}, 0, 0 \right) .
\end{array}
\end{equation*}

\medskip

\noindent The matrix $F_{\text{hiv}}$ is the matrix where the entries represent the new HIV infections and the matrix $V_{\text{hiv}}$ represents the remaining terms:

\medskip

\begin{equation*}
\label{FVhiv}
\begin{array}{lcl}
F_{\text{hiv}}=\left(\begin{array}{cc}
0 & k_1 \left(1-\epsilon_{RT} \right) T_{\text{hiv}} \\ 
\\
0 & 0
\end{array}\right) , \quad
V_{\text{hiv}}=\left(\begin{array}{cc}
\mu & 0 \\ 
\\
-n_H \left(1-\epsilon_P \right) \mu & \sigma_H
\end{array}\right) .
\end{array}
\end{equation*}

\medskip

\noindent Then, through \cite[Lemma 1]{DriesscheWatmough2002}, the basic reproduction number associated to model \eqref{hiv_submodel} is given by

\begin{equation}
\label{R_hiv}
\mathcal{R}_{\text{hiv}} = \rho \left(F_{\text{hiv}} V_{\text{hiv}}^{-1}\right) = \dfrac{k_1 n_H  \left( 1-\epsilon_{RT} \right) \left( 1-\epsilon_P \right) \lambda}{\mu \sigma_H},
\end{equation}

\medskip

\noindent where $\rho$ indicates the spectral radius of $F_{\text{hiv}} V_{\text{hiv}}^{-1}$. The disease-free equilibrium of system \eqref{sarscov_submodel} is given by:

\begin{equation*}\label{DFE_SARSCOV}
\begin{array}{lcl}
E_{\text{sars}} = \left(T_{\text{sars}}, {I_S}_{\text{sars}}, {V_S}_{\text{sars}} \right) = \left(\dfrac{\lambda}{\mu}, 0, 0 \right) .
\end{array}
\end{equation*}

\medskip

\noindent The matrix $F_{\text{sars}}$ is the matrix where the entries represent the new SARS-CoV-2 infections and the matrix $V_{\text{sars}}$ represents the remaining terms:

\medskip

\begin{equation*}
\label{FVsars}
\begin{array}{lcl}
F_{\text{sars}}=\left(\begin{array}{cc}
0 & k_2 T_{\text{sars}} \\ 
\\
0 & 0
\end{array}\right) , \quad
V_{\text{sars}}=\left(\begin{array}{cc}
\mu & 0 \\ 
\\
-n_S \mu & \sigma_S
\end{array}\right) .
\end{array}
\end{equation*}

\medskip

\noindent Therefore, through \cite[Lemma 1]{DriesscheWatmough2002}, the basic reproduction number associated to model \eqref{sarscov_submodel} is given by

\begin{equation}
\label{R_sars}
\mathcal{R}_{\text{sars}} = \rho \left(F_{\text{sars}} V_{\text{sars}}^{-1}\right) = \dfrac{k_2 n_S \lambda}{\mu \sigma_S},
\end{equation}

\medskip

\noindent where $\rho$ indicates the spectral radius of $F_{\text{sars}} V_{\text{sars}}^{-1}$.

\smallskip

\noindent {\bf Main results.} Let $E_{\text{hiv}}^e$ and $E_{\text{sars}}^e$ be the endemic equilibria of \eqref{hiv_submodel} and \eqref{sarscov_submodel}, respectively. We set the following results:

\begin{theorem} \label{bifurc}
Let $\mathcal{R}_{\text{hiv}}$ and $\mathcal{R}_{\text{sars}}$ be the basic reproduction numbers of HIV and SARS-CoV-2 submodels, respectively. Therefore,

\medskip

\begin{itemize}
\item[\text{(i)}] $E_{\text{hiv}}^e$ undergoes a transcritical bifurcation at $\mathcal{R}_{\text{hiv}} = 1$, and lies in the interior of the first octant if $\mathcal{R}_{\text{hiv}} > 1$;
    
\bigskip
    
\item[\text{(ii)}] $E_{\text{sars}}^e$ undergoes a transcritical bifurcation at $\mathcal{R}_{\text{sars}} = 1$, and lies in the interior of the first octant if $\mathcal{R}_{\text{sars}} > 1$.
\end{itemize}
\end{theorem}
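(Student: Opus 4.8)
The plan is to prove item (i) in detail and to deduce (ii) from the structural identity of the two submodels under the substitutions $k_1(1-\epsilon_{RT}) \mapsto k_2$, $n_H(1-\epsilon_P)\mapsto n_S$ and $\sigma_H \mapsto \sigma_S$, which carry \eqref{hiv_submodel} to \eqref{sarscov_submodel} and $\mathcal{R}_{\text{hiv}}$ to $\mathcal{R}_{\text{sars}}$.

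First I would locate $E_{\text{hiv}}^e$ explicitly. Setting $f(x_{\text{hiv}})=0$ with $I_H\neq 0$, the third equation gives $V_H = n_H(1-\epsilon_P)\mu I_H/\sigma_H$; inserting this into $\dot I_H=0$ and cancelling $I_H$ yields $T = \sigma_H/[k_1 n_H(1-\epsilon_{RT})(1-\epsilon_P)] = (\lambda/\mu)/\mathcal{R}_{\text{hiv}}$. Summing the first two equilibrium equations gives $\lambda = \mu(T+I_H)$, hence $I_H = (\lambda/\mu)(1-1/\mathcal{R}_{\text{hiv}})$ and $V_H$ is a positive multiple of $I_H$. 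Thus $E_{\text{hiv}}^e$ has all three coordinates strictly positive exactly when $\mathcal{R}_{\text{hiv}}>1$, coincides with the disease-free equilibrium $E_{\text{hiv}}$ at $\mathcal{R}_{\text{hiv}}=1$, and leaves the first octant for $\mathcal{R}_{\text{hiv}}<1$. This already proves the positivity assertion of (i) and identifies $\mathcal{R}_{\text{hiv}}=1$ as the value at which the two equilibrium branches collide.

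To certify that the collision is a transcritical bifurcation, I would apply Sotomayor's theorem at the point $E_{\text{hiv}}$ with bifurcation parameter $k_1$ and critical value $k_1^\ast = \mu\sigma_H/[n_H(1-\epsilon_{RT})(1-\epsilon_P)\lambda]$, for which $\mathcal{R}_{\text{hiv}}=1$. The first step is to evaluate the Jacobian $Df(E_{\text{hiv}},k_1^\ast)$; a direct computation shows its spectrum is $\{-\mu,\ 0,\ -(\mu+\sigma_H)\}$, so the disease-free equilibrium is nonhyperbolic with a simple zero eigenvalue precisely at threshold. I would then produce a right null vector $v$ and a left null vector $w$ of this Jacobian, noting that $w$ has vanishing first (i.e.\ $T$) component. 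Finally I would verify Sotomayor's three conditions at $(E_{\text{hiv}},k_1^\ast)$: $w^\top f_{k_1}=0$, $w^\top\!\left[Df_{k_1}\,v\right]\neq 0$, and $w^\top\!\left[D^2 f(v,v)\right]\neq 0$. The first holds because $f_{k_1}$ is proportional to $TV_H$ and so vanishes at the disease-free state; the second and third are finite computations using that the only nonlinearities are the bilinear products $TV_H$ and $I_H V_H$.

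The hard part is the nondegeneracy condition $w^\top[D^2 f(v,v)]\neq 0$: one must check that the contracted Hessian along the center direction does not vanish. Because the field is quadratic with a single active bilinear interaction at the disease-free state, this collapses to a single monomial in the (positive) parameters times $v_1 v_3$, with $v_1<0$ and $v_3>0$, hence it is strictly negative and in particular nonzero. Recording this sign together with $w^\top[Df_{k_1}v]>0$ shows the bifurcation is forward, i.e.\ the branch of positive endemic equilibria emanates into the region $\mathcal{R}_{\text{hiv}}>1$ and exchanges stability with $E_{\text{hiv}}$ there, in agreement with the explicit formula obtained in the first step. Item (ii) then follows verbatim with $k_2$ as the bifurcation parameter and critical value $k_2^\ast=\mu\sigma_S/(n_S\lambda)$, under the substitutions listed above.
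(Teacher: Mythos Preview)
Your argument is correct and the positivity computation for $E_{\text{hiv}}^e$ matches the paper's exactly, but your treatment of the bifurcation itself follows a genuinely different route. The paper does not invoke Sotomayor's theorem: instead it linearizes \eqref{hiv_submodel} separately at $E_{\text{hiv}}$ and at $E_{\text{hiv}}^e$, computes the three eigenvalues explicitly in each case, and shows directly that $E_{\text{hiv}}$ is stable precisely for $\mathcal{R}_{\text{hiv}}<1$ while $E_{\text{hiv}}^e$ is stable precisely for $\mathcal{R}_{\text{hiv}}>1$ (Lemmas~\ref{hiv_lemma} and~\ref{hiv_lemma_end}); the exchange of stability together with the collision $E_{\text{hiv}}^e\to E_{\text{hiv}}$ at $\mathcal{R}_{\text{hiv}}=1$ is then declared a transcritical bifurcation. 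The SARS part is handled by repeating the same eigenvalue computation (Lemmas~\ref{sars_lemma} and~\ref{sars_lemma_end}) rather than by the parameter substitution you use. Your Sotomayor verification is more systematic and actually certifies the nondegeneracy and forward direction of the bifurcation, information the paper's eigenvalue comparison only implies; the paper's approach, on the other hand, is self-contained and yields the local stability of $E_{\text{hiv}}^e$ for $\mathcal{R}_{\text{hiv}}>1$ as an explicit by-product, which your argument does not state directly. One minor slip: in the HIV submodel \eqref{hiv_submodel} the only bilinear term is $TV_H$ (there is no $I_H V_H$), though this does not affect your Hessian computation.
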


\medskip

\noindent \noindent Now, let $\mathcal{R}_0$ be the basic reproduction number and $E_{\text{DFE}}$ be the disease-free equilibrium of the full model \eqref{modelo}. 

\begin{theorem} \label{FULL_THEO}
The basic reproduction number of the model \eqref{modelo} is $\mathcal{R}_0 = \max{\{\mathcal{R}_{\text{hiv}}, \mathcal{R}_{\text{sars}}\}}$. Moreover, the disease-free equilibrium point $E_{\text{DFE}}$ is locally asymptotically stable when $\mathcal{R}_0 < 1$. Otherwise it is unstable.
\end{theorem}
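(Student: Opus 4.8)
The plan is to apply the next-generation matrix method of van den Driessche and Watmough \cite{DriesscheWatmough2002} to the full six-dimensional system \eqref{modelo}, exactly as was done for the two submodels. First I would record that the disease-free equilibrium is $E_{\text{DFE}} = (\lambda/\mu, 0, 0, 0, 0, 0)$, obtained by setting every infected compartment $I_H, I_S, V_H, V_S, C$ to zero. Taking these five variables as the infected compartments, I would split their dynamics as $\dot{x} = \mathcal{F}(x) - \mathcal{V}(x)$, placing the genuine new-infection terms $k_1(1-\epsilon_{RT})TV_H$ (into $I_H$) and $k_2 TV_S$ (into $I_S$) in $\mathcal{F}$, and all remaining contributions -- natural death, loss to coinfection, viral production, and clearance of $C$ -- in $\mathcal{V}$.

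The key observation, and the reason the result takes such a clean form, is that every coinfection term is quadratic in the infected variables, being a product $I_S V_H$ or $I_H V_S$. Consequently, upon differentiating at $E_{\text{DFE}}$, where all infected variables vanish, these terms contribute nothing, and the row and column attached to the coinfected compartment $C$ decouple entirely. I would then compute the Jacobians $F = [\partial \mathcal{F}_i/\partial x_j]$ and $V = [\partial \mathcal{V}_i/\partial x_j]$ at $E_{\text{DFE}}$ and, ordering the infected variables as $(I_H, V_H, I_S, V_S, C)$, exhibit both as block diagonal: the $(I_H, V_H)$ block reproduces exactly the submodel matrices $F_{\text{hiv}}, V_{\text{hiv}}$, the $(I_S, V_S)$ block reproduces $F_{\text{sars}}, V_{\text{sars}}$, and the $C$ block contributes $\mathcal{F} = 0$, $\mathcal{V} = \mu$. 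Since the spectral radius of a block-diagonal matrix is the maximum of the spectral radii of its blocks, $\mathcal{R}_0 = \rho(FV^{-1}) = \max\{\mathcal{R}_{\text{hiv}}, \mathcal{R}_{\text{sars}}, 0\} = \max\{\mathcal{R}_{\text{hiv}}, \mathcal{R}_{\text{sars}}\}$ by \eqref{R_hiv} and \eqref{R_sars}.

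For the stability dichotomy I would work directly with the full Jacobian $\mathcal{F}'(E_{\text{DFE}})$, which inherits the same structure: the susceptible compartment $T$ and the coinfected compartment $C$ each contribute a single eigenvalue $-\mu < 0$, while the remaining $4 \times 4$ part splits into the two $2 \times 2$ infection blocks $F_{\text{hiv}} - V_{\text{hiv}}$ and $F_{\text{sars}} - V_{\text{sars}}$. Each such block has automatically negative trace, so by the Routh-Hurwitz criterion its eigenvalues have negative real part exactly when its determinant is positive; a short computation gives these determinants as $\mu\sigma_H(1 - \mathcal{R}_{\text{hiv}})$ and $\mu\sigma_S(1 - \mathcal{R}_{\text{sars}})$ respectively, so each block is stable precisely when the corresponding reproduction number is below $1$. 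Collecting eigenvalues, $E_{\text{DFE}}$ is locally asymptotically stable iff $\mathcal{R}_{\text{hiv}} < 1$ and $\mathcal{R}_{\text{sars}} < 1$, that is, iff $\mathcal{R}_0 < 1$; and when $\mathcal{R}_0 > 1$ at least one block has negative determinant, forcing a positive eigenvalue and hence instability. Alternatively, after verifying the hypotheses (A1)--(A5) of \cite{DriesscheWatmough2002} -- nonnegativity of $\mathcal{F}$ and $\mathcal{V}$ on $(\mathbb{R}_0^+)^6$ follows as in Theorem \ref{thm_bound}, and $V$ is a nonsingular M-matrix since it is lower triangular with positive diagonal in the chosen ordering -- the same dichotomy follows immediately from their Theorem 2.

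The main obstacle here is bookkeeping rather than analysis: one must correctly assign the coinfection inflows when forming $\mathcal{F}$ and $\mathcal{V}$ and then confirm that they are invisible to the linearization at $E_{\text{DFE}}$. It is worth remarking that the value of $\mathcal{R}_0$ is in fact independent of whether the coinfection terms are classified as new infections (in $\mathcal{F}$) or as transfers between infected classes (in $\mathcal{V}$), precisely because their first-order contribution at $E_{\text{DFE}}$ vanishes either way; this robustness is what guarantees that $\mathcal{R}_0$ collapses cleanly to the maximum of the two submodel thresholds.
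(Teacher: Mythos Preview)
Your proposal is correct and follows the same overall strategy as the paper: linearize at $E_{\text{DFE}}$, observe that the coinfection cross-terms vanish there so the Jacobian decouples into the two submodel blocks plus trivial $-\mu$ eigenvalues, and reduce everything to the submodel thresholds. The differences are in execution rather than architecture. For the identity $\mathcal{R}_0=\max\{\mathcal{R}_{\text{hiv}},\mathcal{R}_{\text{sars}}\}$, the paper simply invokes the general principle from \cite{DriesscheWatmough2002} that in a multi-strain model the reproduction number is the maximum over the strains, whereas you actually build the $5\times5$ next-generation matrix for the full system and read off the block-diagonal structure; your argument is more self-contained and makes explicit why the coinfection compartment contributes nothing. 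For the stability dichotomy, the paper writes down the full $6\times6$ Jacobian, notes that its eigenvalues are exactly the six eigenvalues $\varphi_1,\dots,\varphi_6$ already computed in Lemmas~\ref{hiv_lemma} and~\ref{sars_lemma}, and quotes those lemmas; you instead apply the trace--determinant (Routh--Hurwitz) test directly to the two $2\times2$ blocks, obtaining the determinants $\mu\sigma_H(1-\mathcal{R}_{\text{hiv}})$ and $\mu\sigma_S(1-\mathcal{R}_{\text{sars}})$. Both routes are equivalent; yours avoids the explicit square-root eigenvalue formulas at the cost of not reusing the earlier lemmas verbatim.
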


\medskip

\noindent In Section \ref{SEC_5} and Section \ref{SEC_6} we present the proof of Theorem \ref{bifurc} and Theorem \ref{FULL_THEO}, respectively.


\section{Sensitivity analysis of $\mathcal{R}_{\text{hiv}}$ and $\mathcal{R}_{\text{sars}}$} \label{SEC_4}

\noindent Sensitivity indices are used to evaluate how a variable varies in response to changes in a given parameter. These indices represent the ratio between the relative change in the variable and the relative change in the parameter. In the case where the variable $v$ is a differentiable function of the parameter $p$, the sensitivity index can be computed using partial derivatives as follows \cite{MauricioPinto2021, chitnis2008}:

\begin{equation*} 
	\begin{array}{lcl}
	\varphi_p^v = \dfrac{\partial v}{\partial p} \times \dfrac{p}{v} .
\end{array}
\end{equation*}

\medskip

\noindent When considering the specific case of the basic reproduction number $\mathcal{R}$, we have:

\begin{equation*} 
\begin{array}{lcl}
\varphi_p^{\mathcal{R}} = \dfrac{\partial \mathcal{R}}{\partial p} \times \dfrac{p}{\mathcal{R}} .
\end{array}
\end{equation*}

\medskip

\noindent We compute the signs of the sensitivity indices related to $\mathcal{R}_{\text{hiv}}$ and $\mathcal{R}_{\text{sars}}$ and we present them in Table \ref{indices}.

\begin{table}[!h]
\centering
\def\arraystretch{2.5}  
\begin{tabular}{c c @{\hspace{15pt}} c c}
\hline
\textbf{Index} & \textbf{Sensitivity index sign} $\left(\mathcal{R}_{\text{hiv}}\right)$
               & \textbf{Index} & \textbf{Sensitivity index sign} $\left(\mathcal{R}_{\text{sars}}\right)$ \\
                             
\hline

$\varphi_{k_1}^{\mathcal{R}_{\text{hiv}}}$ & $+1$ & $\varphi_{k_2}^{\mathcal{R}_{\text{sars}}}$ & $+ 1$ \\

$\varphi_{n_H}^{\mathcal{R}_{\text{hiv}}}$ & $+1$ & $\varphi_{n_S}^{\mathcal{R}_{\text{sars}}}$ & $+1$ \\

$\varphi_{\epsilon_{RT}}^{\mathcal{R}_{\text{hiv}}}$ & $-\dfrac{\epsilon_{RT}}{1 - \epsilon_{RT}} \, < \, 0$ &  &  \\ 

$\varphi_{\epsilon_P}^{\mathcal{R}_{\text{hiv}}}$ & $-\dfrac{\epsilon_P}{1 - \epsilon_P} \, < \, 0$ &  &  \\

$\varphi_{\lambda}^{\mathcal{R}_{\text{hiv}}} $ & $+1$ & $\varphi_{\lambda}^{\mathcal{R}_{\text{sars}}}$ & $+1$ \\

$\varphi_{\mu}^{\mathcal{R}_{\text{hiv}}} $ & $-1$ & $\varphi_{\mu}^{\mathcal{R}_{\text{sars}}}$ & $-1$ \\

$\varphi_{\sigma_H}^{\mathcal{R}_{\text{hiv}}} $ & $-1$ & $\varphi_{\sigma_S}^{\mathcal{R}_{\text{sars}}}$ & $-1$ \\
\hline
\end{tabular}
\medskip
\medskip
\caption{Sensitivity indices for the parameters of $\mathcal{R}_{\text{hiv}}$ and $\mathcal{R}_{\text{sars}}$.}
\label{indices}
\end{table}

\noindent The transmission rates $k_1$ and $k_2$ contribute to the increase in the basic reproduction number of HIV and SARS-CoV-2, respectively. The parameters $n_H$, $n_S$ and $\lambda$ have the same effect on the respective basic reproduction numbers. On the other hand, the parameters relating to treatment and mortality rates of infected cells and viruses have the opposite effect, {\it  i.e.} increasing the values of these parameters has a retarding effect on the basic reproduction number of the respective viruses.


\section{Proof of Theorem \ref{bifurc}} \label{SEC_5}

\medskip

\noindent With respect to the HIV submodel \eqref{hiv_submodel}, through Theorem 2 of \cite{DriesscheWatmough2002}, we obtain the following Lemma:

\begin{lemma} \label{hiv_lemma}
If \,$\mathcal{R}_{\text{hiv}} < 1$, then the disease-free equilibrium point $E_{\text{hiv}}$ of \eqref{hiv_submodel} is locally asymptotically stable. Otherwise it is unstable.
\end{lemma}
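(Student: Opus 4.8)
The plan is to linearize the HIV submodel \eqref{hiv_submodel} about its disease-free equilibrium $E_{\text{hiv}} = (\lambda/\mu, 0, 0)$ and read the stability directly off the spectrum of the Jacobian; this is exactly the content of Theorem 2 of \cite{DriesscheWatmough2002} specialized to this submodel, but I would carry out the computation explicitly so that the threshold is transparent. First I would form $Df(E_{\text{hiv}})$ for the vector field $f(x_{\text{hiv}})$. Because every infection term is bilinear in $(T, V_H)$ and vanishes at $V_H = 0$, the healthy-cell column decouples: the Jacobian is block triangular, contributing an eigenvalue $-\mu$ from the $T$-compartment and leaving a $2\times 2$ block $M$ acting on the infected coordinates $(I_H, V_H)$, with diagonal entries $-\mu$ and $-\sigma_H$ and off-diagonal entries $k_1(1-\epsilon_{RT})\lambda/\mu$ and $n_H(1-\epsilon_P)\mu$.

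Next I would apply the Routh--Hurwitz criterion to $M$. Its trace equals $-(\mu + \sigma_H) < 0$ unconditionally, so the sign of $\det M = \mu\sigma_H - k_1 n_H (1-\epsilon_{RT})(1-\epsilon_P)\lambda$ alone decides stability. Factoring out $\mu\sigma_H > 0$ and comparing with the expression for $\mathcal{R}_{\text{hiv}}$ in \eqref{R_hiv} gives $\det M = \mu\sigma_H\,(1 - \mathcal{R}_{\text{hiv}})$, so $\det M > 0$ precisely when $\mathcal{R}_{\text{hiv}} < 1$. In that regime both eigenvalues of $M$ have negative real part, and together with the $-\mu$ eigenvalue every eigenvalue of $Df(E_{\text{hiv}})$ lies in the open left half-plane, yielding local asymptotic stability.

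For the instability claim I would note that when $\mathcal{R}_{\text{hiv}} > 1$ the quantity $\det M$ is negative, so the eigenvalues of $M$ are real with opposite signs; in particular one is positive, making $E_{\text{hiv}}$ a saddle and hence unstable. This matches the conclusion obtained abstractly by verifying hypotheses (A1)--(A5) of \cite{DriesscheWatmough2002} — nonnegativity of the production and transfer terms, invariance of the disease-free set, and asymptotic stability of the disease-free subsystem (here just $\dot T = \lambda - \mu T$, with eigenvalue $-\mu$) — and invoking their Theorem 2 with the threshold $\mathcal{R}_{\text{hiv}} = \rho(F_{\text{hiv}} V_{\text{hiv}}^{-1})$ already computed in \eqref{R_hiv}.

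I do not anticipate a genuine obstacle: the one point requiring care is confirming that the sign change of $\det M$ coincides \emph{exactly} with $\mathcal{R}_{\text{hiv}} = 1$, which follows by direct comparison with \eqref{R_hiv}. The boundary case $\mathcal{R}_{\text{hiv}} = 1$ is non-hyperbolic ($\det M = 0$ forces a zero eigenvalue) and is deliberately excluded from this lemma, being resolved instead by the transcritical-bifurcation analysis of Theorem \ref{bifurc}.
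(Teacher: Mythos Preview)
Your proposal is correct and follows essentially the same approach as the paper: linearize at $E_{\text{hiv}}$, observe the block-triangular structure yielding the eigenvalue $-\mu$, and analyze the remaining $2\times 2$ infected block. The only cosmetic difference is that the paper computes the two eigenvalues of that block explicitly via the quadratic formula and then manipulates the resulting inequality, whereas you use the trace/determinant (Routh--Hurwitz) test; both routes identify the threshold $\det M = \mu\sigma_H(1-\mathcal{R}_{\text{hiv}})$.
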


\begin{proof}
Let

\begin{eqnarray}
\nonumber \mathcal{J}_{\text{hiv}}&=&\left(\begin{array}{ccc}
-k_1  \left(1-\epsilon_{RT} \right) {V_H}_{\text{hiv}} - \mu & 0 & -k_1  \left(1-\epsilon_{RT} \right) T_{\text{hiv}}  \\ 
\nonumber \\
k_1  \left(1-\epsilon_{RT} \right) {V_H}_{\text{hiv}} & -\mu & k_1  \left(1-\epsilon_{RT} \right) T_{\text{hiv}} \\
\nonumber \\
0 & n_H \left(1-\epsilon_P \right) \mu & -\sigma_H
\end{array}\right) \\
\nonumber \\
\nonumber \\
\nonumber \\
&=& \left(\begin{array}{ccc}
- \mu & 0 & -\dfrac{k_1  \left(1-\epsilon_{RT} \right) \lambda}{\mu}  \\ 
\\
0 & -\mu & \dfrac{k_1  \left(1-\epsilon_{RT} \right) \lambda}{\mu}  \\
\\
0 & n_H \left(1-\epsilon_P \right) \mu & -\sigma_H \label{jaco_hiv}
\end{array}\right)
\end{eqnarray}

\medskip

be the matrix of linearization of model \eqref{hiv_submodel} around $E_{\text{hiv}}$. Then, the associated eigenvalues are: 

\begin{eqnarray*}
\label{eigen_hiv}
\varphi_1 &=& -\mu \\
\\
\varphi_2 &=& \dfrac{-\left(\sigma_H + \mu\right) + \sqrt{4n_H\lambda\left(1-\epsilon_{RT} \right)\left(1-\epsilon_P \right) k_1 + (\mu - \sigma_H)^2}}{2} \\
\\
\varphi_3 &=& \dfrac{-\left(\sigma_H + \mu\right) - \sqrt{4n_H\lambda\left(1-\epsilon_{RT} \right)\left(1-\epsilon_P \right) k_1 + \left(\mu - \sigma_H\right)^2}}{2} .
\end{eqnarray*}

\medskip

It is trivial to see that the eigenvalues $\varphi_1$ and $\varphi_3$ have negative real part. Regarding to the eigenvalue $\varphi_2$, it has negative real part if

\begin{eqnarray*}
\label{eigen_hiv}
&& -\left(\sigma_H + \mu\right) + \sqrt{4n_H\lambda\left(1-\epsilon_{RT} \right)\left(1-\epsilon_P \right) k_1 + (\mu - \sigma_H)^2}< 0 \\
\\
& \Leftrightarrow& 4n_H\lambda \left(1-\epsilon_{RT} \right)\left(1-\epsilon_P \right) k_1 + (\mu - \sigma_H)^2 <  \left(\sigma_H + \mu\right)^2  \\
\\
& \Leftrightarrow& k_1  n_H \left(1-\epsilon_{RT} \right)\left(1-\epsilon_P \right) \lambda < \mu \sigma_H \\
\\
&\overset{\eqref{R_hiv}}{ \Leftrightarrow}& \mathcal{R}_{\text{hiv}} < 1.
\end{eqnarray*}

\medskip

Hence, $\varphi_2 < 0$ and all eigenvalues have negative real part if $\mathcal{R}_{\text{hiv}} < 1$. This proves Lemma \ref{hiv_lemma}.

\end{proof}

\noindent Now, computing $E_{\text{hiv}}^e$, we get:

\begin{equation*}\label{END_HIV}
\begin{array}{lcl}
E^e_{\text{hiv}} &=& \left(T^e_{\text{hiv}}, {I_H}^e_{\text{hiv}}, {V_H}_{\text{hiv}}^e \right) \, ,
\end{array}
\end{equation*}

\noindent where

\begin{equation*}\label{END_HIV_2}
\begin{array}{lcl}
T_{\text{hiv}}^e & = & \dfrac{\sigma_H}{n_H k_1 \left( 1 - \epsilon_{RT} \right) \left( 1 - \epsilon_P \right)} \\
\\
\\
{I_H}_{\text{hiv}}^e & = & \dfrac{n_H k_1 \lambda \left( 1 - \epsilon_{RT} \right) \left( 1 - \epsilon_P \right) - \mu \sigma_H }{n_H \mu k_1  \left( 1 - \epsilon_{RT} \right) \left( 1 - \epsilon_P \right)} \\
\\
\\
{V_H}_{\text{hiv}}^e & = & \dfrac{n_H k_1 \lambda \left( 1 - \epsilon_{RT} \right) \left( 1 - \epsilon_P \right) - \mu \sigma_H }{k_1 \sigma_H \left( 1 - \epsilon_{RT} \right)}.
\end{array}
\end{equation*}

\medskip

\noindent $E_{\text{hiv}}^e$ lies in the interior of the first octant if $T_{\text{hiv}}^e$, ${I_H}_{\text{hiv}}^e$ and ${V_H}_{\text{hiv}}^e$ are positive. It is clear that $T_{\text{hiv}}^e > 0$. Since $n_H k_1 \lambda \left( 1 - \epsilon_{RT} \right) \left( 1 - \epsilon_P \right) - \mu \sigma_H > 0$, we have ${I_H}_{\text{hiv}}^e > 0$ and ${V_H}_{\text{hiv}}^e > 0$. Hence,

\begin{equation*}\label{LIES_HIV}
\begin{array}{lcl}
&& n_H k_1 \lambda \left( 1 - \epsilon_{RT} \right) \left( 1 - \epsilon_P \right) - \mu \sigma_H > 0 \\
\\
\\
&\Leftrightarrow& \dfrac{n_H k_1 \lambda \left( 1 - \epsilon_{RT} \right) \left( 1 - \epsilon_P \right)}{\mu \sigma_H} > 1 \\
\\
&\overset{\eqref{R_hiv}}{\Leftrightarrow}& \mathcal{R}_{\text{hiv}} > 1 \, ,
\end{array}
\end{equation*}

\medskip

\noindent and $E_{\text{hiv}}^e$ lies in the interior of the first octant. Now, we apply the same process as in Lemma \ref{hiv_lemma} to analyze the stability of $E_{\text{hiv}}^e$:

\begin{lemma} \label{hiv_lemma_end}
If \,$\mathcal{R}_{\text{hiv}} > 1$, then the endemic equilibrium point $E_{\text{hiv}}^e$ of \eqref{hiv_submodel} is locally asymptotically stable. Otherwise it is unstable.
\end{lemma}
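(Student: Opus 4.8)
The plan is to mirror the strategy of Lemma~\ref{hiv_lemma}: form the Jacobian of \eqref{hiv_submodel} at the endemic equilibrium $E_{\text{hiv}}^e$ and show that all three eigenvalues have negative real part precisely when $\mathcal{R}_{\text{hiv}} > 1$. Writing $a = k_1\left(1-\epsilon_{RT}\right)$ and $b = n_H\left(1-\epsilon_P\right)\mu$ to lighten the notation, the Jacobian of \eqref{hiv_submodel} at a generic point is
\[
\begin{pmatrix}
-aV_H - \mu & 0 & -aT \\
aV_H & -\mu & aT \\
0 & b & -\sigma_H
\end{pmatrix},
\]
and I would evaluate it at $\left(T_{\text{hiv}}^e, {I_H}_{\text{hiv}}^e, {V_H}_{\text{hiv}}^e\right)$.

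The crucial step is to exploit the equilibrium identities rather than substituting the explicit (and cumbersome) coordinates of $E_{\text{hiv}}^e$. From $\dot{I}_H = 0$ and $\dot{V}_H = 0$ one reads off $a\, T_{\text{hiv}}^e\, {V_H}_{\text{hiv}}^e = \mu\, {I_H}_{\text{hiv}}^e$ and $b\, {I_H}_{\text{hiv}}^e = \sigma_H\, {V_H}_{\text{hiv}}^e$, which combine to give the single clean relation $ab\, T_{\text{hiv}}^e = \mu \sigma_H$. Using this, the characteristic polynomial factors as $-\left(\varphi + \mu\right)\big(\varphi^2 + B_1 \varphi + B_0\big)$, with $B_1 = a\,{V_H}_{\text{hiv}}^e + \mu + \sigma_H$ and $B_0 = a\, \sigma_H\, {V_H}_{\text{hiv}}^e$. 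I expect the main effort to lie in the bookkeeping of this expansion; the cancellation $ab\, T_{\text{hiv}}^e = \mu\sigma_H$ is exactly what removes the constant term that would otherwise spoil the factorization, so recognizing and applying it is the pivotal observation. Once the factored form is in hand, the remainder is immediate.

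Finally, I would apply the Routh--Hurwitz criterion to the quadratic factor. One eigenvalue equals $-\mu < 0$, while the two roots of $\varphi^2 + B_1\varphi + B_0$ have negative real part if and only if $B_1 > 0$ and $B_0 > 0$. When $\mathcal{R}_{\text{hiv}} > 1$ we have ${V_H}_{\text{hiv}}^e > 0$ (as established above), so both coefficients are strictly positive and $E_{\text{hiv}}^e$ is locally asymptotically stable. Conversely, when $\mathcal{R}_{\text{hiv}} < 1$ we have ${V_H}_{\text{hiv}}^e < 0$, whence $B_0 = a\,\sigma_H\, {V_H}_{\text{hiv}}^e < 0$; the product of the two quadratic roots is then negative, forcing a real eigenvalue of positive sign and hence instability. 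This establishes the dichotomy and proves Lemma~\ref{hiv_lemma_end}.
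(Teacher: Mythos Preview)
Your argument is correct and follows the same overall strategy as the paper---linearize \eqref{hiv_submodel} at $E_{\text{hiv}}^e$ and analyze the eigenvalues---but your execution is cleaner. The paper substitutes the explicit coordinates of $E_{\text{hiv}}^e$ into the Jacobian and then writes out the three eigenvalues via the quadratic formula, obtaining rather heavy closed-form expressions for $\varphi_2^e$ and $\varphi_3^e$ before arguing (as in Lemma~\ref{hiv_lemma}) that $\varphi_2^e<0$ iff $\mathcal{R}_{\text{hiv}}>1$. You instead use the equilibrium identity $ab\,T_{\text{hiv}}^e=\mu\sigma_H$ to force the factorization $-(\varphi+\mu)\bigl(\varphi^2+B_1\varphi+B_0\bigr)$ and then invoke Routh--Hurwitz on the quadratic coefficients $B_1=a\,{V_H}_{\text{hiv}}^e+\mu+\sigma_H$ and $B_0=a\,\sigma_H\,{V_H}_{\text{hiv}}^e$. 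The two routes are equivalent (your quadratic is exactly the one the paper solves), but your use of the equilibrium relation avoids the cumbersome radicals entirely and makes the dependence on the sign of ${V_H}_{\text{hiv}}^e$---hence on $\mathcal{R}_{\text{hiv}}-1$---transparent.
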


\begin{proof}
Let

\begin{equation*}
\label{jaco_hiv_end}
\begin{array}{lcl}
\mathcal{J}_{\text{hiv}}^e&=& \left(\begin{array}{ccc}
-\dfrac{n_H \lambda k_1  \left( 1 - \epsilon_{RT} \right) \left( 1 - \epsilon_P \right)}{\sigma_H} & 0 & -\dfrac{\sigma_H}{n_H \left(1-\epsilon_P \right)}  \\ 
\\
\dfrac{n_H \lambda k_1  \left( 1 - \epsilon_{RT} \right) \left( 1 - \epsilon_P \right)  -\mu \sigma_H}{\sigma_H} & -\mu & \dfrac{\sigma_H}{n_H \left(1-\epsilon_P \right)}  \\
\\
0 & n_H \left(1-\epsilon_P \right) \mu & -\sigma_H
\end{array}\right)
\end{array}
\end{equation*}

\medskip

be the matrix of linearization of model \eqref{hiv_submodel} around $E_{\text{hiv}}^e$. Then, the associated eigenvalues are: 

\begin{eqnarray*}
\label{eigen_hiv_end}
\varphi_1^e &=& -\mu \\
\\
\varphi_2^e &=& - \dfrac{ k_1 \lambda n_H  \left(1-\epsilon_{RT} \right)\left(1-\epsilon_P \right) + \sigma_H^2}{2\sigma_H} \\
\\
&&+ \dfrac{\sqrt{n_H^2 k_1^2  \left(1-\epsilon_{RT} \right)^2 \left(1-\epsilon_P \right)^2 \lambda^2 - 2 n_H k_1 \sigma_H^2  \left(1-\epsilon_{RT} \right)\left(1-\epsilon_P \right) \lambda + 4 \mu \sigma_H^3 +\sigma_H^4}}{2\sigma_H} \\
\\
\varphi_3^e &=&  - \dfrac{ k_1 \lambda n_H  \left(1-\epsilon_{RT} \right)\left(1-\epsilon_P \right) + \sigma_H^2}{2\sigma_H} \\
\\
 &&- \dfrac{\sqrt{n_H^2 k_1^2  \left(1-\epsilon_{RT} \right)^2 \left(1-\epsilon_P \right)^2 \lambda^2 - 2 n_H k_1 \sigma_H^2  \left(1-\epsilon_{RT} \right)\left(1-\epsilon_P \right) \lambda + 4 \mu \sigma_H^3 +\sigma_H^4}}{2\sigma_H} .
\end{eqnarray*}

\medskip

It is trivial to see that the eigenvalues $\varphi_1^e$ and $\varphi_3^e$ have negative real part. Repeating the algebraic manipulations as in Lemma \ref{hiv_lemma}, the eigenvalue $\varphi_2^e$ has negative real part if

$$
\varphi_2^e < 1 \,\, \Leftrightarrow \,\, \mathcal{R}_{\text{hiv}} > 1 \, .
$$

\medskip

and consequently $E_{\text{hiv}}^e$ is stable. Otherwise is unstable. Hence, Lemma \ref{hiv_lemma_end} is proved.
\end{proof}

\medskip

\noindent Hence, $E_{\text{hiv}}^e$ undergoes a trancritical bifurcation at $\mathcal{R}_{\text{hiv}} = 1$ and interchanges its stability with $E_{\text{hiv}}$.

\medskip

\noindent We repeat the same procedure for the SARS-CoV-2 submodel as in the HIV submodel \eqref{sarscov_submodel}. Then, through Theorem 2 of \cite{DriesscheWatmough2002}, we obtain the following lemma:

\begin{lemma} \label{sars_lemma}
If \,$\mathcal{R}_{\text{sars}} < 1$, then the disease-free equilibrium point $E_{\text{sars}}$ of \eqref{sarscov_submodel} is locally asymptotically stable. Otherwise it is unstable.
\end{lemma}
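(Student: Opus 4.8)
The plan is to mirror verbatim the argument already used for the HIV submodel in Lemma \ref{hiv_lemma}, exploiting the structural symmetry between \eqref{hiv_submodel} and \eqref{sarscov_submodel}. First I would linearise \eqref{sarscov_submodel} about the disease-free equilibrium $E_{\text{sars}} = (\lambda/\mu, 0, 0)$ and write down the Jacobian
$$
\mathcal{J}_{\text{sars}} = \begin{pmatrix}
-\mu & 0 & -\dfrac{k_2 \lambda}{\mu} \\
0 & -\mu & \dfrac{k_2 \lambda}{\mu} \\
0 & n_S \mu & -\sigma_S
\end{pmatrix}.
$$
The key observation is that the first column has the single nonzero entry $-\mu$, so the $T$-direction decouples at the equilibrium: this immediately yields the eigenvalue $\varphi_1 = -\mu < 0$ and reduces the stability question to the $2\times 2$ block acting on $(I_S, V_S)$.

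Next I would compute the eigenvalues of that block from its characteristic polynomial $\varphi^2 + (\mu+\sigma_S)\varphi + (\mu\sigma_S - k_2 n_S \lambda) = 0$, obtaining
$$
\varphi_{2,3} = \dfrac{-(\mu+\sigma_S) \pm \sqrt{(\mu-\sigma_S)^2 + 4 k_2 n_S \lambda}}{2}.
$$
The root $\varphi_3$ (with the minus sign) manifestly has negative real part, so the entire analysis concentrates on $\varphi_2$. Exactly as in Lemma \ref{hiv_lemma}, I would establish $\varphi_2 < 0$ by squaring the inequality $\sqrt{(\mu-\sigma_S)^2 + 4 k_2 n_S \lambda} < \mu + \sigma_S$; the terms $\mu^2$ and $\sigma_S^2$ cancel and the condition collapses to $k_2 n_S \lambda < \mu\sigma_S$, which by \eqref{R_sars} is precisely $\mathcal{R}_{\text{sars}} < 1$.

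For the converse, I would note that the same chain of equivalences, read backwards, gives $\varphi_2 > 0$ whenever $\mathcal{R}_{\text{sars}} > 1$, so $E_{\text{sars}}$ is unstable in that regime. I do not anticipate any genuine obstacle here: the computation is the literal SARS-CoV-2 analogue of the HIV one, with the triple $\left(k_1(1-\epsilon_{RT}),\, n_H(1-\epsilon_P),\, \sigma_H\right)$ replaced by $\left(k_2,\, n_S,\, \sigma_S\right)$. The only point requiring minimal care is the sign verification for $\varphi_2$ — confirming that the radicand inequality is \emph{equivalent} to, not merely implied by, the reproduction-number threshold — but this step is routine and identical in form to the one already carried out for the HIV submodel.
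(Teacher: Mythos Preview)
Your proposal is correct and follows essentially the same approach as the paper: linearise at $E_{\text{sars}}$, read off the trivial eigenvalue $-\mu$, compute the remaining two eigenvalues of the $(I_S,V_S)$ block, and reduce the sign of the critical root to the inequality $k_2 n_S \lambda < \mu \sigma_S \Leftrightarrow \mathcal{R}_{\text{sars}} < 1$ by squaring. The only cosmetic discrepancy is that the paper labels the eigenvalues $\varphi_4,\varphi_5,\varphi_6$ (continuing the numbering from the HIV submodel) rather than $\varphi_1,\varphi_2,\varphi_3$.
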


\begin{proof}
Let

\begin{equation}
\label{jaco_sars}
\begin{array}{lcl}
\mathcal{J}_{\text{sars}}=\left(\begin{array}{ccc}
-k_2  {V_S}_{\text{sars}} - \mu & 0 & -k_2  T_{\text{sars}}  \\ 
\\
k_2  {V_S}_{\text{sars}} & -\mu & k_2  T_{\text{sars}} \\
\\
0 & n_S \mu & -\sigma_S
\end{array}\right) = \left(\begin{array}{ccc}
- \mu & 0 & -\dfrac{k_2 \lambda}{\mu}  \\ 
\\
0 & -\mu & \dfrac{k_2 \lambda}{\mu}  \\
\\
0 & n_S \mu & -\sigma_S
\end{array}\right)
\end{array}
\end{equation}

\medskip

be the matrix of linearization of model \eqref{sarscov_submodel} around $E_{\text{sars}}$. Then, the associated eigenvalues are: 

\begin{eqnarray*}
\label{eigen_hiv}
\varphi_4 &=& -\mu \\
\\
\varphi_5 &=& \dfrac{-\left(\sigma_S + \mu \right) + \sqrt{4 n_S \lambda  k_2 + \left( \mu - \sigma_S \right)^2}}{2} \\
\\
\varphi_6 &=& \dfrac{-\left(\sigma_S + \mu \right) - \sqrt{4 n_S \lambda  k_2 + \left( \mu - \sigma_S \right)^2}}{2} .
\end{eqnarray*}

\medskip

Analogously to what was done in the HIV submodel, it is easy to see that the eigenvalues $\varphi_4$ and $\varphi_6$ have negative real part. Hence, if

\begin{eqnarray*}
\label{eigen_hiv}
&& -\left(\sigma_S + \mu \right) + \sqrt{4 n_S \lambda  k_2 + \mu^2 - 2\mu \sigma_S + {\sigma_S}^2}< 0 \\
\\
& \Leftrightarrow& 4 n_S \lambda  k_2 + \mu^2 - 2\mu \sigma_S + {\sigma_S}^2 < \left(\sigma_S + \mu \right)^2  \\
\\
& \Leftrightarrow& k_1  n_H \left(1-\epsilon_{RT} \right)\left(1-\epsilon_P \right) \lambda < \mu \sigma_H \\
\\
& \Leftrightarrow& k_2 n_S \lambda < \mu \sigma_S \\
\\
&\overset{\eqref{R_sars}}{ \Leftrightarrow}& \mathcal{R}_{\text{sars}} < 1,
\end{eqnarray*}

\medskip

then $\varphi_5 < 0$ and all eigenvalues have negative real part. This proves Lemma \ref{sars_lemma}.
\end{proof}

\noindent Now, we compute the endemic equilibrium point for SARS-CoV-2 submodel \eqref{sarscov_submodel}. We get:

\begin{equation*}\label{END_SARS}
\begin{array}{lcl}
\medskip
E^e_{\text{sars}} &=& \left(T^e_{\text{sars}}, {I_S}^e_{\text{sars}}, {V_S}_{\text{sars}}^e \right) \\
\\
\medskip
&=& \left(\dfrac{\sigma_S}{n_S k_2 }, \dfrac{n_S k_2 \lambda - \mu \sigma_S }{n_S \mu k_2}, \dfrac{n_S k_2 \lambda - \mu \sigma_S}{k_2 \sigma_S} \right) .
\end{array}
\end{equation*}

%

\medskip

\noindent $E_{\text{sars}}^e$ lies in the interior of the first octant if $T_{\text{sars}}^e$, ${I_S}_{\text{sars}}^e$ and ${V_S}_{\text{sars}}^e$ are positive. It is clear that $T_{\text{sars}}^e > 0$. Since $n_S k_2 \lambda - \mu \sigma_S  > 0$, we have ${I_S}_{\text{sars}}^e > 0$ and ${V_S}_{\text{sars}}^e > 0$. Hence,

\begin{equation*}\label{LIES_SARS}
\begin{array}{lcl}
\smallskip
&& n_S k_2 \lambda - \mu \sigma_S  > 0 \\
\\
&\Leftrightarrow& \dfrac{n_S k_2 \lambda}{\mu \sigma_S} > 1 \\
\\
&\overset{\eqref{R_sars}}{\Leftrightarrow}& \mathcal{R}_{\text{sars}} > 1 \, ,
\end{array}
\end{equation*}

\medskip

\noindent and $E_{\text{sars}}^e$ lies in the interior of the first octant. Now, we apply the same process as in Lemma \ref{sars_lemma} to analyze the stability of $E_{\text{sars}}^e$:

 \begin{lemma} \label{sars_lemma_end}
If \,$\mathcal{R}_{\text{sars}} > 1$, then the endemic equilibrium point $E_{\text{sars}}^e$ of \eqref{sarscov_submodel} is locally asymptotically stable. Otherwise it is unstable.
\end{lemma}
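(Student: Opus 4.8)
The plan is to reproduce, step for step, the argument of Lemma \ref{hiv_lemma_end}, since the SARS-CoV-2 submodel \eqref{sarscov_submodel} is the image of the HIV submodel \eqref{hiv_submodel} under the replacements $k_1(1-\epsilon_{RT}) \mapsto k_2$, $n_H(1-\epsilon_P)\mapsto n_S$ and $\sigma_H \mapsto \sigma_S$. First I would linearize \eqref{sarscov_submodel} about $E_{\text{sars}}^e$. The Jacobian at a general point $(T, I_S, V_S)$ reads
$$
\mathcal{J}_{\text{sars}} = \left(\begin{array}{ccc} -k_2 V_S - \mu & 0 & -k_2 T \\ k_2 V_S & -\mu & k_2 T \\ 0 & n_S\mu & -\sigma_S \end{array}\right),
$$
and inserting the equilibrium identities $k_2 T_{\text{sars}}^e = \sigma_S/n_S$ and $k_2 {V_S}_{\text{sars}}^e = (n_S k_2 \lambda - \mu\sigma_S)/\sigma_S$ --- so that the $(1,1)$ entry collapses to $-k_2 {V_S}_{\text{sars}}^e - \mu = -n_S k_2\lambda/\sigma_S$ --- yields
$$
\mathcal{J}_{\text{sars}}^e = \left(\begin{array}{ccc} -\dfrac{n_S k_2\lambda}{\sigma_S} & 0 & -\dfrac{\sigma_S}{n_S} \\ \dfrac{n_S k_2\lambda - \mu\sigma_S}{\sigma_S} & -\mu & \dfrac{\sigma_S}{n_S} \\ 0 & n_S\mu & -\sigma_S \end{array}\right),
$$
which is exactly the analogue of $\mathcal{J}_{\text{hiv}}^e$.

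Next I would compute the characteristic polynomial of $\mathcal{J}_{\text{sars}}^e$. A direct expansion shows it factors as $(\varphi+\mu)\big(\varphi^2 + \beta\varphi + \gamma\big)$ with $\beta = (n_S k_2\lambda + \sigma_S^2)/\sigma_S$ and $\gamma = n_S k_2\lambda - \mu\sigma_S$, so the spectrum consists of $\varphi_4^e = -\mu$ together with
$$
\varphi_{5,6}^e = -\frac{n_S k_2\lambda + \sigma_S^2}{2\sigma_S} \pm \frac{\sqrt{n_S^2 k_2^2\lambda^2 - 2 n_S k_2\sigma_S^2\lambda + 4\mu\sigma_S^3 + \sigma_S^4}}{2\sigma_S}.
$$
The eigenvalues $\varphi_4^e$ and $\varphi_6^e$ have negative real part, so the stability question is governed entirely by the branch $\varphi_5^e$, which is the only one that can cross the imaginary axis.

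The one step that needs care is the sign of $\varphi_5^e$, and here I would argue exactly as in Lemma \ref{hiv_lemma_end}: the condition $\varphi_5^e < 0$ is equivalent to $\sqrt{\,\cdots\,} < n_S k_2\lambda + \sigma_S^2$; the right-hand side being positive, squaring is sign-consistent, and after cancelling the common terms $n_S^2 k_2^2\lambda^2$ and $\sigma_S^4$ the inequality reduces to $4\mu\sigma_S^3 < 4 n_S k_2\lambda\sigma_S^2$, i.e.\ $n_S k_2\lambda > \mu\sigma_S$, which by \eqref{R_sars} is precisely $\mathcal{R}_{\text{sars}} > 1$. Equivalently, since $\beta > 0$ always, the Routh--Hurwitz criterion for the quadratic factor reduces to $\gamma = n_S k_2\lambda - \mu\sigma_S > 0$, the same condition. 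I do not expect a genuine obstacle: the computation is the mirror image of the HIV case, the only minor subtleties being to check that $E_{\text{sars}}^e$ sits in the first octant (already established above for $\mathcal{R}_{\text{sars}} > 1$) and that the radicand is handled correctly whether positive or negative --- if it is negative, $\varphi_{5,6}^e$ form a complex conjugate pair whose common real part $-\beta/2$ is negative, so stability still holds. Combining this with Lemma \ref{sars_lemma}, the equilibria $E_{\text{sars}}$ and $E_{\text{sars}}^e$ exchange stability at $\mathcal{R}_{\text{sars}} = 1$, yielding the transcritical bifurcation asserted in part (ii) of Theorem \ref{bifurc}.
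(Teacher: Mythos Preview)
Your proposal is correct and follows essentially the same approach as the paper: linearize at $E_{\text{sars}}^e$, factor off the eigenvalue $-\mu$, and reduce the sign of the remaining branch $\varphi_5^e$ to the inequality $n_S k_2\lambda > \mu\sigma_S$, i.e.\ $\mathcal{R}_{\text{sars}}>1$. If anything, your write-up is more complete than the paper's terse ``repeating the algebraic manipulations'': you make the factorization $(\varphi+\mu)(\varphi^2+\beta\varphi+\gamma)$ explicit, offer the equivalent Routh--Hurwitz reading $\gamma>0$, and cover the complex-radicand case, none of which the paper spells out.
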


\begin{proof}
Let

\begin{equation*}
\label{jaco_sars_end}
\begin{array}{lcl}
\mathcal{J}_{\text{sars}}^e&=& \left(\begin{array}{ccc}
-\dfrac{n_H \lambda k_2  }{\sigma_S} & 0 & -\dfrac{\sigma_S}{n_S}  \\ 
\\
\dfrac{n_H \lambda k_2  -\mu \sigma_S}{\sigma_S} & -\mu & \dfrac{\sigma_S}{n_S}  \\
\\
0 & n_S \mu & -\sigma_S
\end{array}\right)
\end{array}
\end{equation*}

\medskip

be the matrix of linearization of model \eqref{sarscov_submodel} around $E_{\text{sars}}^e$. Then, the associated eigenvalues are: 

\begin{eqnarray*}
\label{eigen_sars_end}
\varphi_4^e &=& -\mu \\
\\
\varphi_5^e &=& - \dfrac{ k_2 \lambda n_S + \sigma_S^2 - \sqrt{n_S^2 k_2^2 \lambda^2 - 2 n_S k_2 \sigma_S^2 \lambda + 4 \mu \sigma_S^3 +\sigma_S^4}}{2\sigma_S} \\
\\
\varphi_6^e &=&  - \dfrac{ k_2 \lambda n_S + \sigma_S^2 + \sqrt{n_S^2 k_2^2 \lambda^2 - 2 n_S k_2 \sigma_S^2 \lambda + 4 \mu \sigma_S^3 +\sigma_S^4}}{2\sigma_S} .
\end{eqnarray*}

\medskip

It is trivial to see that the eigenvalues $\varphi_4^e$ and $\varphi_6^e$ have negative real part. Repeating the algebraic manipulations as in Lemma \ref{hiv_lemma}, the eigenvalue $\varphi_5^e$ has negative real part if

$$
\varphi_5^e < 1 \,\, \Leftrightarrow \,\, \mathcal{R}_{\text{sars}} > 1 \, .
$$

\medskip

and consequently $E_{\text{sars}}^e$ is stable. Otherwise is unstable. Hence, Lemma \ref{sars_lemma_end} is proved.
\end{proof}

\section{Proof of Theorem \ref{FULL_THEO}} \label{SEC_6}

\noindent We analyze the full model \eqref{modelo}. The disease-free equilibrium point of system \eqref{modelo} is given by:

\begin{equation*}\label{DFE_FULL}
\begin{array}{lcl}
E_{\text{DFE}} = \left(T^{\star}, {I_H}^{\star}, {I_S}^{\star}, {V_H}^{\star}, {V_S}^{\star}, C^{\star} \right) = \left(\dfrac{\lambda}{\mu}, 0, 0, 0, 0, 0 \right) .
\end{array}
\end{equation*}

\noindent The proof of the first result of this theorem follows from the result obtained by the authors of \cite{DriesscheWatmough2002}, {\it i.e.} the basic reproduction number of a model with multiples infections in interaction can be approximated as the maximum of the basic reproduction number of each submodel. This result is valid because, in the long-term dynamics of the system, the infection with the highest basic reproduction number controls the spread of the disease, which determines whether or not the infection can persist in the population. Accordingly, applying this principle to \eqref{modelo}, the basic reproduction number is given by

\begin{equation*}\label{R0_FULL}
\begin{array}{lcl}
\mathcal{R}_0 = \max{\{\mathcal{R}_{\text{hiv}}, \mathcal{R}_{\text{sars}}\}} \,.
\end{array}
\end{equation*}

\noindent Now, let

\begin{equation*}
\label{jaco_coinf}
\begin{array}{lcl}
\mathcal{J}_{\text{f}} &=& \adjustbox{max width=\textwidth}{
$\left(\begin{array}{cccccc}
- k_1 \left(1 - \epsilon_{RT} \right) {V_H}^{\star} - k_2 {V_S}^{\star} - \mu & 0 & 0 & - k_1 \left(1 - \epsilon_{RT} \right) T^{\star} & -k_2 T^{\star} & 0  \\ 
\\
k_1 \left(1 - \epsilon_{RT} \right) {V_H}^{\star} & - \left(k_2 {V_S}^{\star} + \mu \right)& 0 & k_1 \left(1 - \epsilon_{RT} \right) T^{\star} & - k_2 {I_H}^{\star} & 0 \\
\\
k_2 {V_S}^{\star} & 0 & - \left( k_1 \left(1 - \epsilon_{RT} \right) {V_H}^{\star} + \mu \right) & - k_1 \left(1 - \epsilon_{RT} \right) {I_S}^{\star} & k_2 T^{\star} & 0 \\
\\
0 & n_H \mu \left(1 - \epsilon_P \right) & 0 & -\sigma_H & 0 & 0 \\
\\
0 & 0 & n_S \mu & 0 & - \sigma_S & 0 \\
\\
0 & k_2 {V_S}^{\star} & k_1 \left(1 - \epsilon_{RT} \right) {V_H}^{\star} & - k_1 \left(1 - \epsilon_{RT} \right) {I_S}^{\star} & k_2 {I_H}^{\star} & - \mu
\end{array}\right)$} \\
\\
&=& \adjustbox{max width=\textwidth}{
$\left(\begin{array}{cccccc}
- \mu & 0 & 0 &- \dfrac{k_1 \left(1 - \epsilon_{RT} \right) \lambda}{\mu} & - \dfrac{k_2 \lambda}{\mu} & 0 \\ 
\\
0 & - \mu & 0 &\dfrac{k_1 \left(1 - \epsilon_{RT} \right) \lambda}{\mu} & 0 & 0  \\ 
\\
0 & 0 & - \mu & 0 & \dfrac{k_2 \lambda}{\mu} & 0  \\ 
\\
0 & n_H  \left(1 - \epsilon_P \right) \mu & 0 &- \sigma_H & 0 & 0  \\ 
\\
0 & 0 & n_S \mu &0 & -\sigma_S & 0  \\ 
\\
0 & 0 & 0 & 0 & 0 & - \mu 
\end{array}\right) $}
\end{array}
\end{equation*}

\medskip

\noindent be the matrix of linearization of \eqref{modelo} around $E_{\text{DFE}}$. Then, the associated eigenvalues are

$$
\varphi_1 \, , \quad \varphi_2 \, , \quad \varphi_3 \, , \quad \varphi_4 \, , \quad \varphi_5 \quad \text{and} \quad \varphi_6 \, ,
$$

\medskip

\noindent the same eigenvalues of \eqref{jaco_hiv} and \eqref{jaco_sars} combined. We already know that $\varphi_1$, $\varphi_3$, $\varphi_4$ and $\varphi_6$ have negative real part and $\varphi_2$ and $\varphi_5$ have negative real part if $\mathcal{R}_{\text{hiv}}$ and $\mathcal{R}_{\text{sars}}$ are less than one. Therefore, $E_{\text{DFE}}$ is asymptotically stable if $\mathcal{R}_0 < 1$ and Theorem \ref{FULL_THEO} is proved.

\newpage

\section{Numerics} \label{SEC_7}

\noindent In this section, we present several numerical simulations to analyze the dynamics of HIV and SARS-CoV-2 coinfection. By varying the efficacy of $\epsilon_{RT}$ and $\epsilon_P$, we explore how these treatments impact infected cells and viral loads over time. We use the parameter values given in Table \ref{tabela} for all figures.

\begin{table}[ht!]
\centering
\scalebox{0.89}{
\def\arraystretch{1.6}
\begin{tabular}{ lclc }
\hline\noalign{\smallskip}
\textbf{Parameter} & \textbf{Symbol} & \textbf{Value} & \textbf{Reference}  \\
\noalign{\smallskip}\hline\noalign{\smallskip}
Constant production rate of $T$ cells & $\lambda$ & $10$ cells mm$^{-3}$  & \cite{BairagiAdak2017}   \\
HIV infection rate & $k_1$ & $10^{-8}$ virions mm$^3$ day$^{-1}$  & \cite{CarvalhoPinto2022} \\
SARS-CoV-2 infection rate & $k_2$ & $10^{-3}$ virions mm$^3$ day$^{-1}$  & \cite{Tang2017} \\
RTI-based treatment efficacy & $\epsilon_{RT}$ & [0,1]  & ------ \\
PI drug efficacy & $\epsilon_P$ & [0,1]  & ------ \\
Bursting size for HIV growth & $n_H$ & $42 - 88$ virions cell$^{-1}$ & \cite{RoyBairagi2009} \\
Bursting size for SARS-CoV-2 growth & $n_S$ & $10 - 2500$ virions cell$^{-1}$ & \cite{BairagiAdak2017} \\
Natural death rate of $T$, $I_H$, $I_S$ and $C$ cells & $\mu$ & $10^{-2}$ day$^{-1}$ & \cite{BairagiAdak2017}  \\
Death rate of HIV & $\sigma_H$ & $2 - 3$ day$^{-1}$  & \cite{RoyBairagi2009} \\
Death rate of SARS-CoV-2 & $\sigma_S$ & $3$ day$^{-1}$  & \cite{BairagiAdak2017} \\
\noalign{\smallskip}\hline
\end{tabular}
}
\smallskip
\caption{\label{tabela}Parameter values used in numerical simulations.}
\end{table}

\noindent Figure \ref{hiv_submodel_figure} shows the dynamics of HIV-infected cells $I_H$ and HIV viral load $V_H$ of \eqref{hiv_submodel} for different values of $\epsilon_{RT}$ and $\epsilon_P$, representing the effectiveness of RTI and PI. A noticeable decline in both $I_H$ and $V_H$ populations is observed over time for all cases, with a more rapid and effective reduction as the treatment efficacy parameters increase. We also observe that when treatment efficacy is high enough, the number of HIV-infected cells is higher than the HIV viral load. Figure \ref{sarscov_submodel_figure} illustrates the dynamics of $I_S$ and $V_S$ over time, for the submodel \eqref{sarscov_submodel}. The dashed curve, representing $I_S$, increases rapidly until it stabilizes, while the continuous curve, representing $V_S$, remains practically constant and much lower than $I_S$. However, it can be seen that at the inflection point of the curve of $I_S$, there is also a slight increase in viral concentration, so this seems to be a coherent conclusion and in line with the simulation observation.  Figure \ref{coinfection_figure} shows the dynamics of $I_H$, $I_S$ and coinfected cells with both viruses $C$ over time, for different treatment efficacy values $\epsilon_{RT}$ and $\epsilon_P$:

\begin{itemize}
\item In the first column ``without treatment'', the number of HIV-infected cells shows an oscillatory behavior before decreasing. Initially, the number of cells infected with SARS-CoV-2 increases significantly. Asymptotically, the number of these cells tends to zero. The number of coinfected cells increases exponentially;

\smallskip

\item With increasing treatment efficacy $\epsilon_{RT} = \epsilon_P$ from $40\%$ to $80\%$, a faster decrease in $I_H$ cells is observed. $I_S$ cells continue to grow, but the growth rate decreases as treatment efficacy increases. The number of coinfected cells tends to decrease as the effectiveness of the treatments increases.
\end{itemize}

\noindent Figure \ref{coinfection_figure_2} shows the dynamics of $V_H$ and $V_S$ over time, for different values of $\epsilon_{RT}$ and $\epsilon_P$:

\begin{itemize}
\item In the first column, ``Without treatment'', the HIV viral load increases in an approximately linear fashion. On the other hand, the concentration of SARS-CoV-2 increases quickly, reaches a peak and then decreases until it stabilizes near zero;

\smallskip

\item As treatment efficacy, $\epsilon_{RT} = \epsilon_P$, increases from 40\% to 80\%, two scenarios occur: (i) $V_H$ begins to be controlled more efficiently, leading to a decrease in the viral load; (ii) For any value of the effectiveness of the treatment, $V_S$ has a similar asymptotic behavior, increasing rapidly at first and then stabilizing this rate of increase. However, this growth is reduced as $\epsilon_{RT}$ and $\epsilon_P$ increase.
\end{itemize}

\noindent The simulation in Figure \ref{epsilons_VS} shows that antiretroviral therapy has an influence on the dynamics of SARS-CoV-2 infected cells when in an environment where there are also HIV-infected cells. It reveals that the higher the $\epsilon_{RT}$ and the lower the $\epsilon_P$, the slower the growth of $I_S$ and $V_S$, even though both grow continuously.

\begin{figure}[ht!]
\includegraphics[width=1.155\textwidth]{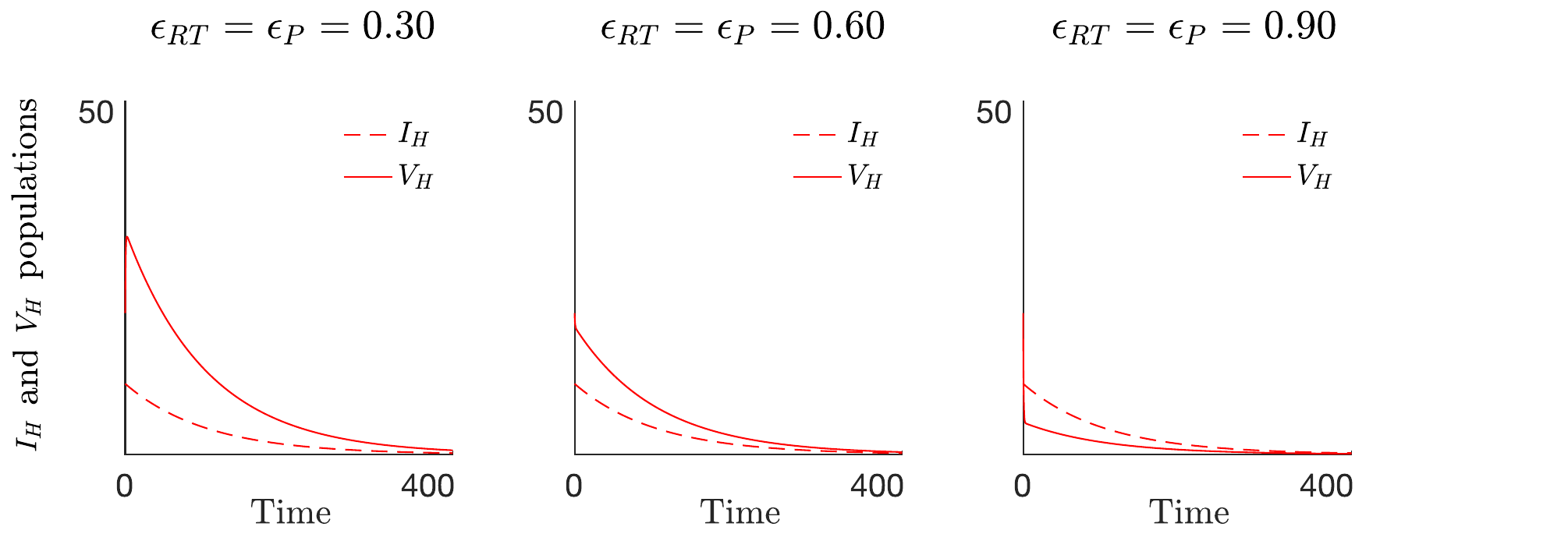}
\caption{$I_H$ and $V_H$ dynamics of system \eqref{hiv_submodel} for different values of $\epsilon_{RT}$ and $\epsilon_P$. Initial conditions: $(T, I_H, V_H) = (10, 10, 20)$.}
\label{hiv_submodel_figure}
\end{figure}

\begin{figure}[ht!]
\includegraphics[width=1.10\textwidth]{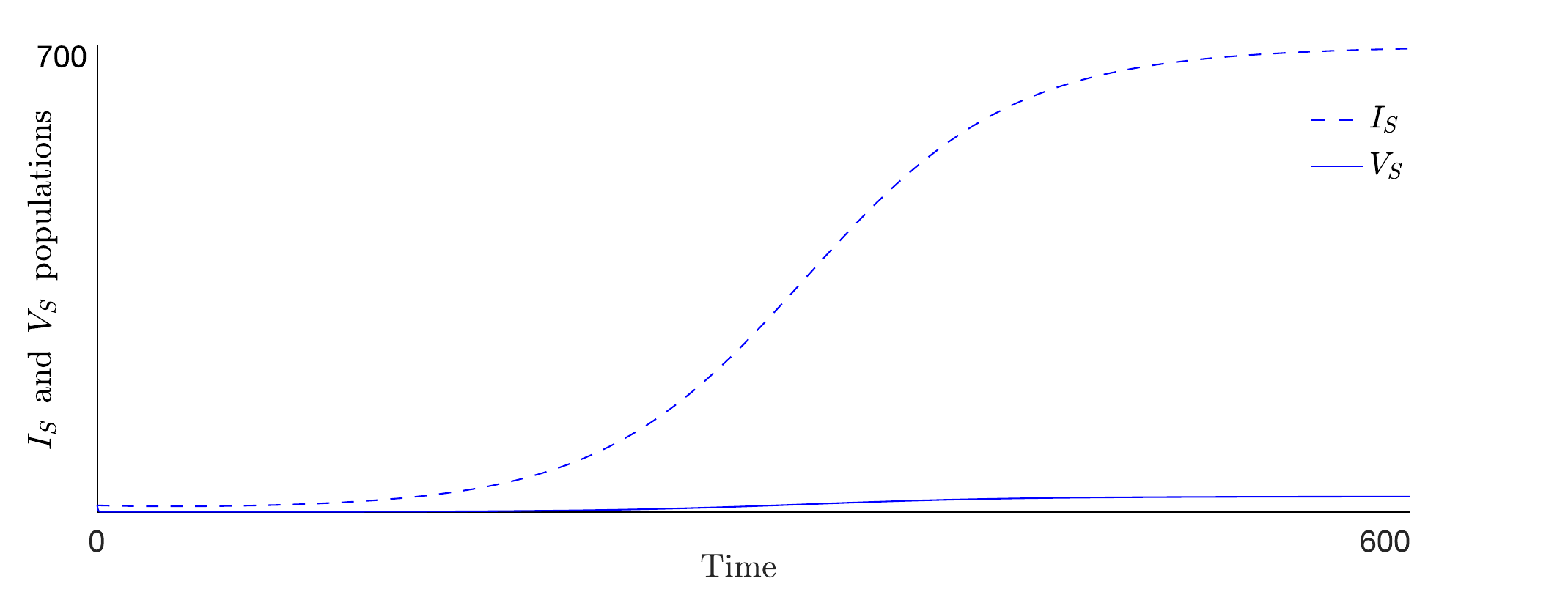}
\caption{$I_S$ and $V_S$ dynamics of system \eqref{sarscov_submodel}. Initial conditions: $(T, I_S, V_S) = (10, 10, 10)$.}
\label{sarscov_submodel_figure}
\end{figure}

\begin{figure}[ht!]
\includegraphics[width=1.08\textwidth]{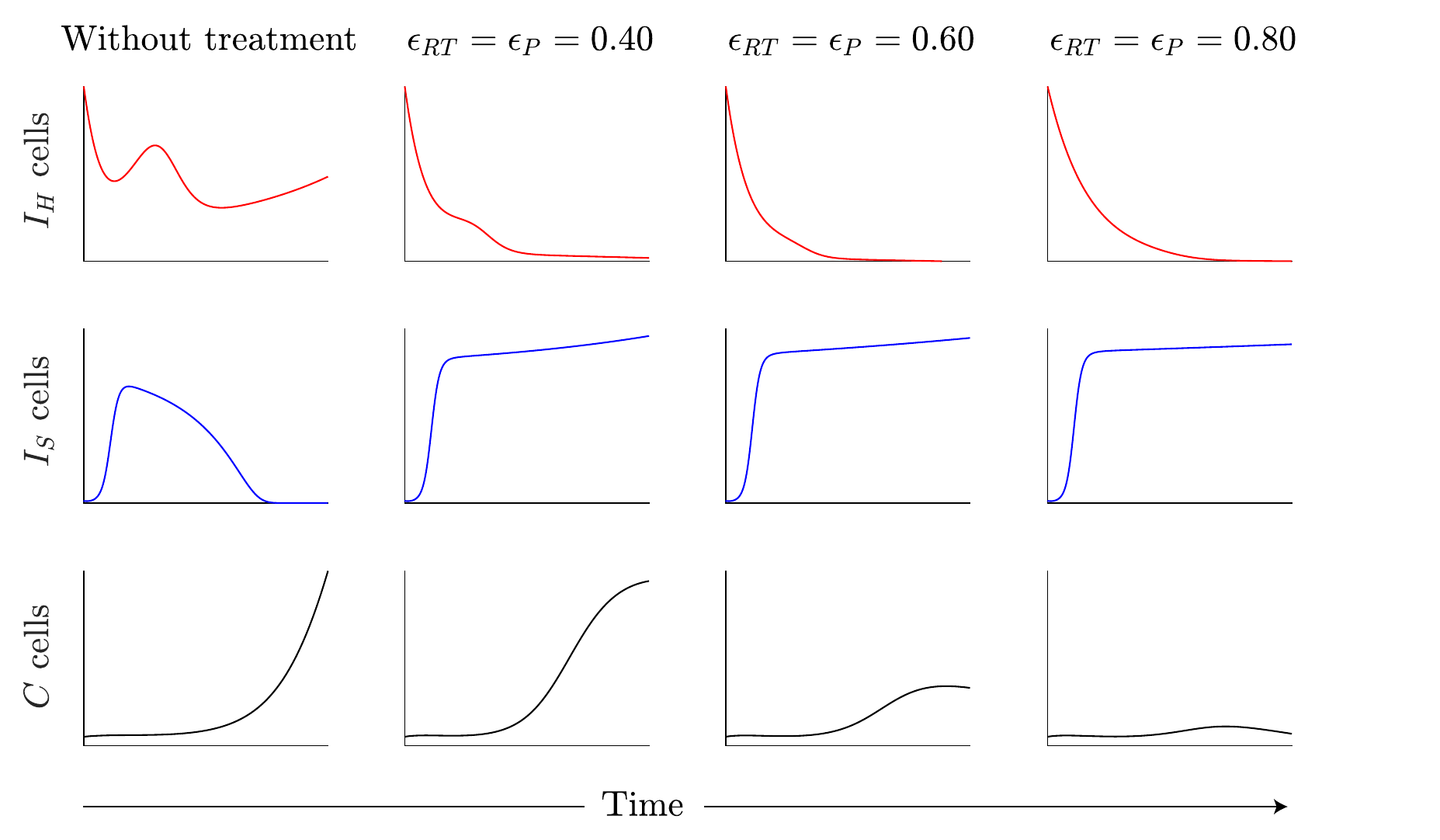}
\caption{$I_H$, $I_S$ and $C$ dynamics of system \eqref{modelo} for different values of $\epsilon_{RT}$ and $\epsilon_P$. Initial conditions: $(T, I_H, I_S, V_H, V_S, C) = (10, 60, 10, 50, 1, 1)$.}
\label{coinfection_figure}
\end{figure}

\begin{figure}[ht!]
\includegraphics[width=1.08\textwidth]{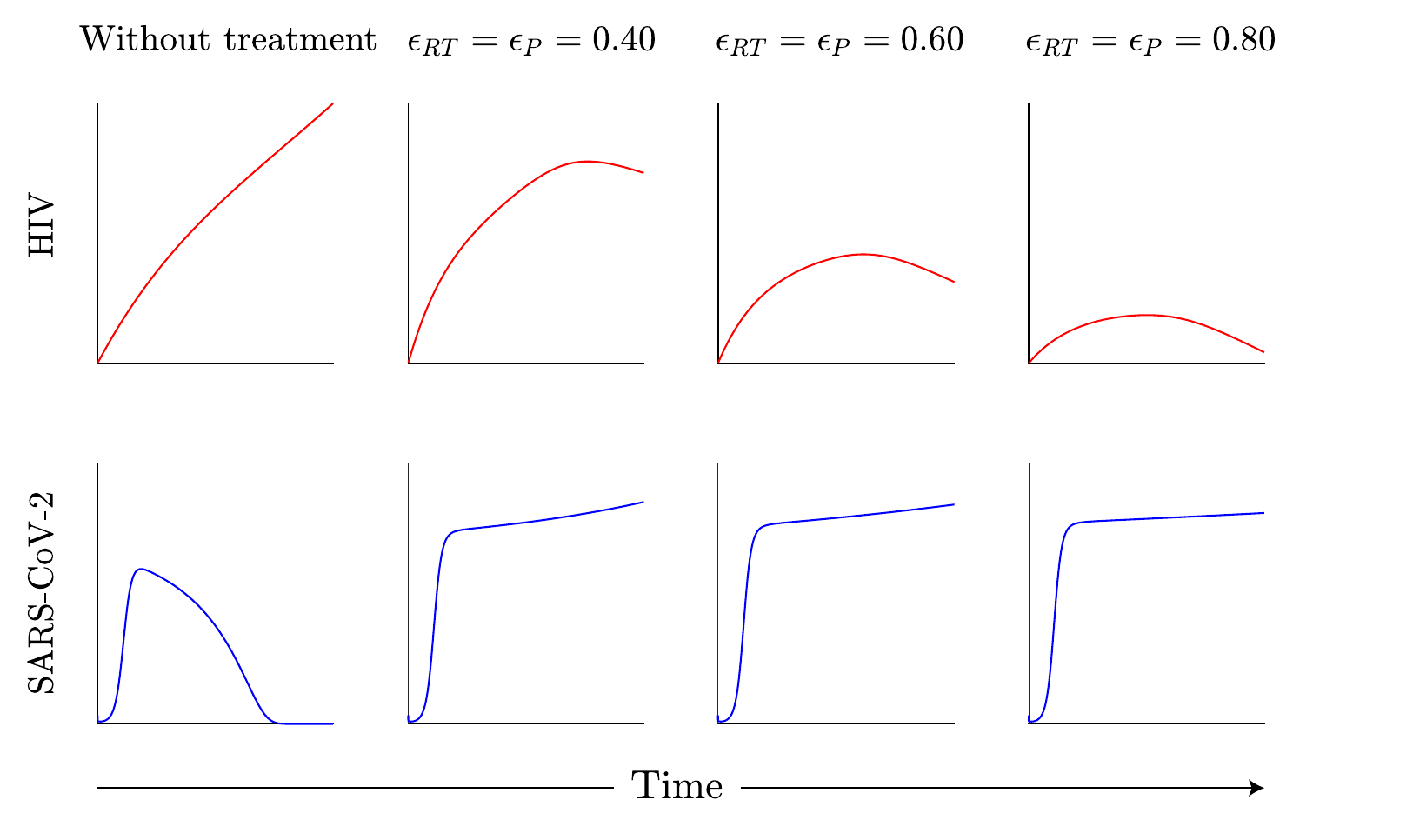}
\caption{$V_H$ and $V_S$ dynamics of system \eqref{modelo} for different values of $\epsilon_{RT}$ and $\epsilon_P$. Initial conditions: $(T, I_H, I_S, V_H, V_S, C) = (10, 60, 10, 50, 1, 1)$.}
\label{coinfection_figure_2}
\end{figure}

\begin{figure}[ht!]
\includegraphics[width=1.1\textwidth]{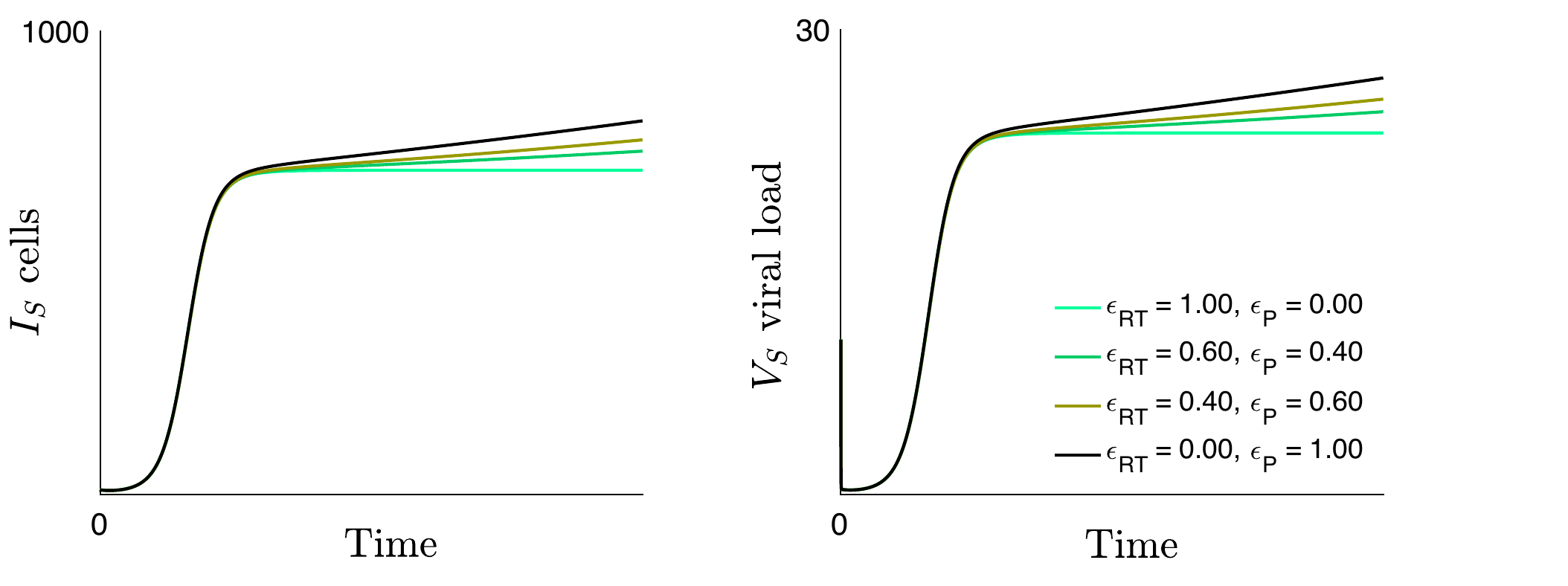}
\caption{$I_S$ and $V_S$ dynamics of system \eqref{modelo} and under HAART therapy with parameter combination $\epsilon_{RT} + \epsilon_P = 1$. Initial conditions: $(T, I_H, I_S, V_H, V_S, C) = (10, 10, 10, 10, 10, 1)$.}
\label{epsilons_VS}
\end{figure}


\section{Conclusions} \label{SEC_CONCL}

\noindent In this work we have analyzed a mathematical model to understand the dynamics of infection of healthy T cells by HIV and SARS-CoV-2 under the effect of HAART. We have shown that the solutions of model \eqref{modelo} are positive and bounded within a biologically reasonable region (Theorem \ref{thm_bound}). Furthermore, we have computed the disease-free and endemic equilibria for HIV and SARS-CoV-2 submodels and their respective basic reproduction numbers, $\mathcal{R}_{\text{hiv}}$ and $\mathcal{R}_{\text{sars}}$. We have proved that the endemic equilibria of each submodel undergo a transcritical bifurcation when the respective basic reproduction number equals one (Theorem \ref{bifurc}). Hence, for submodels \eqref{hiv_submodel} and \eqref{sarscov_submodel}, we have shown that the disease-free equilibria are stable when the respective basic reproduction number is less than 1 and unstable otherwise. Their endemic equilibria are stable when the basic reproduction number is greater than one and unstable otherwise. Finally, regarding system \eqref{modelo}, we have shown that the basic reproduction number for the coinfection model is expressed as $\mathcal{R}_0 = \max{\left\{ \mathcal{R}_{\text{hiv}}, \mathcal{R}_{\text{sars}} \right\} ,}$ and that the disease-free equilibrium point remains stable when $\mathcal{R}_0 < 1$ (Theorem \ref{FULL_THEO}).

\smallskip


\noindent Lastly, regarding the numerical results, we have concluded that antiretroviral therapy has a significant impact in reducing both HIV viral load and HIV-infected cells (Figure \ref{hiv_submodel_figure}). Moreover, although HAART specifically targets HIV and HIV-infected cells, Figures \ref{coinfection_figure}, \ref{coinfection_figure_2} and \ref{epsilons_VS} have indicated that this therapy also reduces SARS-CoV-2 proliferation and SARS-CoV-2-infected cells. Consequently, HAART has proven highly effective in reducing coinfected cells, with greater reductions observed as the efficacy of HAART improves (Figure \ref{coinfection_figure}).

\smallskip

\noindent {\bf Future work.} In future research, we plan to extend our current coinfection model to examine the effects of SARS-CoV-2 vaccination on HIV progression. While this study focused on the influence of HAART on SARS-CoV-2 dynamics, the new approach will assess how COVID-19 vaccines can impact HIV viral load and the number of HIV-infected cells. The possibility of future research using a fractional order derivative model, which could be similar to the work developed by the authors of \cite{MauricioPinto2021, Naik2024}, is not excluded.

%



\newpage

\end{document}